\documentclass[lettersize,journal]{IEEEtran}
\usepackage{amsmath,amsfonts}
\usepackage{algorithmic}
\usepackage{algorithm}
\usepackage{array}
\usepackage[caption=false,font=normalsize,labelfont=sf,textfont=sf]{subfig}
\usepackage{textcomp}
\usepackage{stfloats}
\usepackage{url}
\usepackage{verbatim}
\usepackage{graphicx}
\usepackage{cite}
\usepackage{stfloats}
\hyphenation{op-tical net-works semi-conduc-tor IEEE-Xplore}
% updated with editorial comments 8/9/2021
%\UseRawInputEncoding

\usepackage{threeparttable}
\usepackage{makecell}
\usepackage{CJK}
\usepackage{enumitem}
\usepackage{lipsum}
\allowdisplaybreaks[4]

\makeatletter

\newcommand{\Rmnum}[1]{\expandafter\@slowromancap\romannumeral #1@}
\makeatother

\usepackage{epstopdf}
\usepackage{caption}

\usepackage{amsthm}

\usepackage{booktabs}
\DeclareCaptionLabelFormat{cont}{#1~#2\alph{ContinuedFloat}}
\captionsetup[ContinuedFloat]{labelformat=cont}
\captionsetup[table]{justification=centering}

\captionsetup[table]{labelsep=space} % 表
\captionsetup[figure]{labelsep=period} % 图

\newtheorem{definition}{Definition}
\newtheorem{theorem}{Theorem}

\newtheorem{remark}{Remark}
\newtheorem{assumption}{Assumption}

\begin{document}
\title{Distributed Affine Formation Control of Linear Multi-agent Systems with Adaptive Event-triggering}
\author{Chenjun Liu, Jason J. R. Liu,~\IEEEmembership{Member,~IEEE,} Zhan Shu,~\IEEEmembership{Senior~Member,~IEEE,} and James Lam,~\IEEEmembership{Fellow,~IEEE}
\thanks{This work is supported in part by the National Natural Science Foundation of China under Grant 62403008, the Macao Science and Technology Development Fund under Grant 0008/2023/ITP1, Grant 0139/2023/RIA2, and the University of Macau under Grant SRG2022-00054-FST and Grant MYRG-CRG2024-00037-FST-ICI.}
\thanks{Chenjun Liu and Jason J. R. Liu are with the Department of Electromechanical Engineering, University of Macau, Macau (email: yc37973@um.edu.mo (C. Liu);  jasonliu@um.edu.mo (J. Liu)). }
\thanks{Zhan Shu is with the Department of Electrical and Computer Engineering, University of Alberta, Edmonton, Alberta, T6G 1H9, Canada (email: zhan.shu@ualberta.ca (Z. Shu)).}
\thanks{James Lam is with the Department of Mechanical Engineering, University of Hong Kong, Pokfulam Rd, Hong Kong (email: james.lam@hku.hk (J. Lam)).}
\thanks{Corresponding author: Jason J. R. Liu.}
}
% The paper headers
\markboth{Journal of \LaTeX\ Class Files,~Vol.~14, No.~8, August~2021}%
{Shell \MakeLowercase{\textit{et al.}}: A Sample Article Using IEEEtran.cls for IEEE Journals}

%\IEEEpubid{0000--0000/00\$00.00~\copyright~2021 IEEE}
% Remember, if you use this you must call \IEEEpubidadjcol in the second
% column for its text to clear the IEEEpubid mark.

\maketitle

\begin{abstract}
Concerning general multi-agent systems with limited communication, this paper proposes distributed formation control protocols under adaptive event-triggered schemes to operate affine transformations of nominal formations. To accommodate more practical system mechanics, we develop an event-triggered controller that drives the leader to a desired state by bringing in the compensation term. Based on triggering instants' state information, an affine formation control method with adaptive event-triggering is designed for each follower, making the whole protocol effective in refraining from successive communication while not relying on predefined global information. In particular, mitigating the effect of partial state availability, an output-based control solution is presented to expand the protocol's serviceable range. Finally, we perform numerical simulations on the formation and its affine transformations to verify the effectiveness of the control protocol and the feasibility of the event-triggered mechanism.
\end{abstract}

\begin{IEEEkeywords}
Adaptive event-triggered control, affine formation control, output-based control, Zeno behavior
\end{IEEEkeywords}

\section{Introduction}
\IEEEPARstart{M}{ulti-agent} systems (MASs) consist of individual entities, by exploiting intercommunication and self-organization capabilities, thus rendering an orderly and synergistic movement. Formation control of MASs firstly intends to drive agents from scattered and unorganized states to form a prescribed paradigm, having been applied in marine search and rescue \cite{Zhang2022Event}, agricultural production \cite{Chen2022Information}, electric power inspection \cite{Luo2023A} and others. Moreover, to flexibly cope with unknown circumstances, the deformation reconfiguration technique of formation has attracted numerous researchers.

Conventional formation control methods have been categorized by distinct sensing capabilities and topological imperatives. When relative displacement is measured, MASs can realize translation formation by tracking the time-varying signals (state/velocity) \cite{Marina2020Maneuvering}. Also, if the rotated or otherwise transformed formations belong to the feasible set of time-varying formations, they may be achieved \cite{Zhang2024Adaptive}, subject to the fact that the desired state of each agent is known and needs to be transmitted. Whereas in distance-based formation control, translational and rotational transformations are implemented \cite{Mehdifar2020Prescribed}. In contrast, Reference 
\cite{Su2023Bearing} designed time-varying bearings covering translation, scaling, and rotation transformations, with only known leaders' desired states. It is noted that the above methods cannot fulfill more general transformations including shear, coplanar, and others.

To simultaneously realize various transformations, several novel methods have been proposed, including hybrid measurements \cite{Fang2024Integrated}, barycentric coordinates \cite{Han2018A}, complex Laplacian matrices \cite{Fang2024Distributed}, and stress matrices. In particular, the stress matrix is invariant to arbitrary affine transformations, with the potential to realize more general formation transformations. References \cite{Lin2015Necessary,Zhao2018Affine} first investigated static affine formation and formation maneuver control theories focusing on low-order integrator dynamical models. Then, several works studying the finite-time performance and proposing solutions to the system uncertainty have been published in \cite{Zhao2024Specified} and \cite{Zhi2021Leader}. Meanwhile, some system models were characterized by high-order integrator dynamics  \cite{Chen2020Distributed,Xu2020Affine} and more general linear dynamics \cite{Onuoha2020Fully,Chang2023Fully,Xu2018Affine,Xu2019Affine}. However, References \cite{Onuoha2020Fully,Chang2023Fully,Xu2018Affine} all assume that the leaders' states remain desired, and Reference \cite{Xu2019Affine} assumes that the system matrix is Hurwitz to ensure that the leader moves to the desired state, which both limit the mechanical structure of the actual system. Hence, this paper investigates the leader's control input settings to solve it.

It is worth mentioning that most of the above results rely on continuous communication, restricting the usability in bandwidth and energy-constrained settings. Event-triggered control (ETC) provides an effective solution by performing only the necessary information transmissions and control updates. While on the stress matrix-based formation, Reference \cite{Yang2020Affine} studied the event-triggered affine formation maneuver control problem for MASs described by single-integrator dynamics, but the fixed triggering threshold led to a bounded formation error, reducing the system performance. The sampled-data-based control scheme was proposed in \cite{Ma2023Event} to settle the affine formation issue for second-order MASs. Still, global information was required to determine the triggering function. Then in \cite{Chang2023Fully}, to modulate the interaction instant during affine formations of linear MASs, the dynamic triggering threshold depended only on the measurement error itself was adopted. Synthesizing the above concerns, developing more efficient event-triggered mechanisms is necessary, such as avoiding global information with adaptive control and moderating the triggering performance with the formation errors.

Inspired by the above studies, this paper investigates the adaptive event-triggered affine formation control (AFC) issue for general linear MASs. The major contributions are elaborated as follows:
\begin{enumerate}%[nosep]
\item [1)] This paper proposes effective event-triggered AFC protocols to regulate MASs based only on the information at discontinuous event-triggering instants, preventing network congestion. Different from existing results \cite{Lin2015Necessary,Zhao2018Affine,Zhao2024Specified,Zhi2021Leader}, we consider the MASs featuring more general linear dynamics. Compared with linear MASs studies   \cite{Onuoha2020Fully,Chang2023Fully,Xu2018Affine,Xu2019Affine}, leaders are driven to move to the desired positions by an event-triggered controller without assuming the mechanical structure of the actual system.
\item [2)] Adaptive techniques are utilized to affine formation controllers and event-triggered functions to modulate gains dynamically. Unlike \cite{Ma2023Event}, it has the advantage of not relying on a priori global topological information such as the eigenvalues of the Laplacian matrix. Even if there are variables in the MAS components, notably the number of agents or the communication links, the protocol can also be applied in an extended way.
\item [3)] We consider various factors in reality, such as the absence of sensors or damaged sensors, which may result in the full state of the system not being directly available. This paper extends the scope of application by giving the output-based protocol for affine formation.
\end{enumerate}

The remainder of this paper is organized as follows. Section \ref{Preliminaries} provides some essential concepts and the pending issue. In Section \ref{MainResults}, we present state/output-based event-triggered formation methods. Section \ref{Simulations} verifies the above results by simulations. In Section \ref{Conclusions}, the paper summarizes the outcomes and highlights prospective research directions. 

\noindent {\bf Notations:} The symbol $\|\cdot\|$ indicates either the spectral norm of a matrix or the Euclidean norm of a vector. For matrices $A_{i} \in \mathbb{R}^{m \times m}$, ${\rm diag}\{A_{1}, \dots, A_{n}\}$ denotes the partitioned diagonal matrix and $A_{i} \otimes A_{j}$ is the Kronecker product.  $\lambda(A_{i})$ and $\lambda_{\min}(A_{i})$ mean the eigenvalues and smallest of them. $\mathbf{1}_{n} = [1, \dots, 1]^{T}\in \mathbb{R}^{n}$, $\mathbf{0}_{n}= [0, \dots, 0]^{T}\in \mathbb{R}^{n}$ and ${I}_{n} = \mathrm{diag}\{1, \dots, 1\}\in \mathbb{R}^{n\times n}$ denote the all-one column vector, the all-zero column vector and the unit matrix, respectively. $\iota=\sqrt{-1}$ represents the imaginary unit. For constants $b_{i} \in \mathbb{R}$, where $i = 1, \dots, n$, $\max_{i\in\{1, \dots, n\}}\{b_{i}\}$ or $\min_{i\in\{1, \dots, n\}}\{b_{i}\}$ is the maximum/minimum value.

\section{Preliminaries \label{Preliminaries}}
%%%%%%%%%%%%%%%%%%%%%%%%%%%%%%%%%%%%%%%%%%%%%%%%%%%%%%%
\subsection{Graph Theory}
An undirected graph $G=(V,E)$ is to describe the communication topology between agents, having a node set $V=\{1,\dots,n\}$ and an edge set $E\subseteq V\times V$. For the leader-follower mode, $V$ is divided by the leader set $V_{l}=\{1,\dots,n_{l}\}$ and the follower set $V_{f}=\{n_{l}+1,\dots,n\}$. The $i$-th node can perceive the information of $j$-th node if $(i,j)\in E$, so the $j$-th node's neighbors' set is designated as $\mathcal{N}_{j}=\{o\in V : (j,o)\in E\}$.

The definition of stress matrix is imported from Reference \cite{Lin2015Necessary}. For an MAS having $n$ agents, the configuration is $p=[p_{l}^{T},p_{f}^{T}]^{T}=[p^{T}_{1},\dots,p^{T}_{n}]^{T}$, where $p_{i}\in \mathbb{R}^{d}$ with $d>2$ denotes the position. The formation $(G, p)$ is defined by mapping the $i$-th node of $G$ to $p_{i}$. To assign a weight $w_{ij}\in \mathbb{R}$ to edge $(i,j)\in E$, the set of scalars $\{w_{ij}\}_{(i,j)\in E}$ is labeled as the stress of $(G, p)$. The equilibrium stress can convert to $(\Omega\otimes I_{d}) p=0$, where the stress matrix $\Omega\in \mathbb{R}^{n\times n}$ is denoted as $[\Omega]_{ii}=\sum_{j\in \mathcal{N}_{i}}w_{ij}$, $[\Omega]_{ij}=-w_{ij}$ for $j\in \mathcal{N}_{i}$ and 0 otherwise for $j\notin \mathcal{N}_{i}$. For the leader-follower mode, $\Omega$ can be partitioned as $\Omega=[\Omega_{ll},\Omega_{lf};\Omega_{fl},\Omega_{ff}]$,
where $\Omega_{ll}\in\mathbb{R}^{n_{l}\times n_{l}}$ and $\Omega_{ff}\in\mathbb{R}^{n_{f}\times n_{f}}$.

\subsection{Affine Formation}
Network localizability is the foundation for achievable formation tasks. To explicate affine localizability, the affine span $\mathcal{S}$ of a series of nodes $\{a_{i}\}^{n}_{i=1}$ in $\mathbb{R}^{d}$ is first stated as
\begin{equation*}
\mathcal{S}=\left\{\sum_{i=1}^{n}{b_{i}a_{i}|\ b_{i}\in \mathbb{R} \ \mathrm{for}\ \forall i}\ \mathrm{and} \ \sum_{i=1}^{n}{b_{i}}=1 \right\}.
\end{equation*}
If there are non fully zeroed $\{b_{i}\}^{n}_{i=1}$ satisfying $\sum_{i=1}^{n}{b_{i}a_{i}}=0$ and $\sum_{i=1}^{n}{b_{i}}=0$, these nodes are affinely dependent, conversely affinely independent. The configuration $P(a)\in \mathbb{R}^{n\times d}$ and the  augmented matrix $\tilde{P}(a)\in \mathbb{R}^{n\times(d+1)}$ are introduced by $P(a)=[a^{T}_{1};\dots;a^{T}_{n}]$ and $\bar{P}(a)=[a^{T}_{1},1;\dots;a^{T}_{n},1]$, where $\{a_{i}\}^{n}_{i=1}$ are affinely independent if and only if rows of $\bar{P}(a)$ is linearly independent, meaning that $\bar{P}^{T}(a)b = 0$ has a unique zero solution $b=\mathbf{0}_{n}$. Since $\bar{P}(a)$ has $d+1$ columns, then at most $d+1$ nodes are affinely independent in $\mathbb{R}^{d}$.

Then, we obtain the definition of the target affine formation, which is a time-varying or constant affine transformation of the nominal configuration $(G,a)$.
\begin{definition}\label{d1}
\cite{Zhao2018Affine} The target affine formation satisfies
\begin{equation*}
p^{*}(t)=[I_{n}\otimes \Upsilon(t)]a+\mathbf{1}_{n}\otimes \upsilon(t),
\end{equation*}
where $\Upsilon(t)$ and $\upsilon(t)$ are continuous of $t$, and $p^{*}(t)=[(p_{l}^{*}(t))^{T},(p_{f}^{*}(t))^{T}]^{T}$.
\end{definition}

\begin{assumption}\label{a1}
Assume that a set of leader nodes $\{a_{i}\}^{n_{l}}_{i=1}$ of the nominal configuration $(G,a)$ affinely spans $\mathbb{R}^{d}$, which means there are at least $d+1$ leaders.
\end{assumption}

\begin{assumption}\label{a2}
The stress matrix $\Omega$ of the nominal configuration $(G,a)$ is positive semi-definite with $\mathrm{rank}(\Omega)=n-d-1$.
\end{assumption}

\begin{remark}
When Assumptions \ref{a1} and \ref{a2} hold, the sufficient and necessary condition for affine localization is that $\Omega_{ff}$ is nonsingular, which implies $\Omega_{ff}$ is positive definite and the states of followers can be obtained by $p_{f}=-(\Omega^{-1}_{ff}\Omega_{fl}\otimes I_{d})p_{l}$.
\end{remark}

\subsection{Problem Statement}
Consider the general linear MAS consisting of $n_{l}$ leaders and $n_{f}$ followers ($n=n_{l}+n_{f}$), characterized as follows:
\begin{equation} \label{equation:1}
\begin{cases}
\dot{p}_{i}(t)=A{p}_{i}(t)+Bu_{i}(t),   \\
q_{i}(t)=Cp_{i}(t), \quad i\in V,
\end{cases}
\end{equation}
where $p_{i}(t)\in \mathbb{R}^{d}$, $u_{i}(t)\in \mathbb{R}^{m}$ and $q_{i}(t)\in \mathbb{R}^{r}$ denote the state, the control input and the output of the $i$-th agent, respectively. $A\in \mathbb{R}^{d\times d}$, $B\in \mathbb{R}^{d\times m}$ and $C \in \mathbb{R}^{r\times d}$ are constant matrices with ${\rm rank}(B) = m$.

With limited communication, the control objective of this paper is to design controllers such that leaders move to the desired position, i.e., $\mathrm{lim}_{t\rightarrow\infty}(p_{l}(t)-p^{*}_{l}(t))=0$, while followers manage to follow to realize the affine formation, that is, $\mathrm{lim}_{t\rightarrow\infty}[p_{f}(t)+(\Omega^{-1}_{ff}\Omega_{fl}\otimes I_{d})p_l(t)]=0$. In the sense, the control input is designed using merely the state information of neighbors at the triggering instant sequence $\{t_{k}^{i}\}_{k\in \mathbb{N}}$, and the triggering will not be continuous for infinite time to ensure the feasibility and avoid the Zeno behavior.

%%%%%%%%%%%%%%%%%%%%%%%%%%%%%%%%%%%%%%%%%%%%%%%%%%%%%%%
%%%%%%%%%%%%%%%%%%%%%%%%%%%%%%%%%%%%%%%%%%%%%%%%%%%%%%%
\section{Main Results \label{MainResults}}
In this section, we study the affine formation for linear MASs with available or unavailable full-state information.
%%%%%%%%%%%%%%%%%%%%%%%%%%%%%%%%%%%%%%%%%%%%%%%%%%%%%%%

\subsection{State-based Event-triggered Affine Formation
\label{State-based Event-triggered Affine Formation}}
%%%%%%%%%%%%%%%%%%%%%%%%%%%%%%%%%%%%%%%%%%%%%%%%%%%%%%%
Assume that the leaders' target states are time-invariant, that is, $\dot{p}^{*}_{l}=0$. When full-state information can be obtained, to trend the leader to converge to the target position, we design the ETC input as follows:
\begin{equation}\label{equation:2}
{u_i}(t) = K\hat{p}_i(t)+{v_i}, \quad i \in {V_l},
\end{equation}
where $\hat{p}_i(t)=p_i(t^{i}_{k})$ denotes each agent's state of the triggering time $t=t^{i}_{k}$, $K\in \mathbb{R}^{m\times d}$ is the gain matrix, and $v_i$ is the  compensation term satisfying that $(A+BK)p_i^{*}+Bv_i=0$.

For $i$-th follower, we propose the distributed ETC protocol:
\begin{align}\label{equation:3}
\left\{\begin{aligned}
&{u_i}(t)=K\hat{p}_i(t)+\hat{y}_i(t), \\
&{{\dot y}_i}(t) =  - {d_{i}(t)}\hat{\theta}_{i}(t), \quad {\dot d}_{i}(t) =  \hat{\theta}_{i}^{T}(t)\hat{\theta}_{i}(t)  ,\quad i \in {V_f},
\end{aligned} \right.
\end{align}
where $d_i(t)$ denotes the adaptive coupling
parameter with $d_i(0)>1$, $\hat{\theta}_{i}(t)=\sum_{j \in {V_{f}}} {w_{ij}({{\hat y}_{i}}(t) - {\hat y}_{j}(t))}+\sum_{j \in {V_{l}}} {w_{ij}({{\hat y}_{i}}(t) - v_{j})}$, $y_i(t)$ is devised to estimate the compensation of each follower and $\hat{y}_i(t)=y_{i}(t_{k}^{i})$.

Define the measurement errors of state and compensation of $i$-th agent as ${\epsilon_{p_{i}}}(t) = \hat{p}_i(t)-p_i(t)$ for $i \in {V}$ and ${\epsilon_{y_{i}}}(t)  = \hat{y}_i(t)-y_i(t)$ for $i \in {V_f}$. Motivated by \cite{Yan2021Robust}, the triggering time sequence $\{t^{i}_{k}\}$ of $i$-the agent satisfies the adaptive event-triggered strategy:
\begin{equation}\label{equation:5}
t_{k+1}^{i}=\mathrm{inf}\{t>t_{k}^{i}|f_{i}(t)>0\}, \quad i \in {V},
\end{equation}
with $t_{0}^{i} = 0$ and the triggering functions are defined by
\begin{align}
f_{i}(t)\!= & \gamma_{i}(t)\|{\epsilon_{p_{i}}}(t)\|^{2}\!-\!\|\hat{p}_i(t)-p_i^{*}\|^{2}\!-\!\mu_{1}e^{-\varpi_{1}t},  \, i \in {V_{l}},\label{equation:6.1}\\
f_{i}(t)\!= & \varphi_{i}(t)\|{\epsilon_{p_{i}}}(t)\|^{2}\!+\!\phi_{i}(t)d_{i}(t)\|{\epsilon_{y_{i}}}(t)\|^{2} \!-\!\|\hat{\xi}_{i}(t)\|^{2}\!\notag\\
&-\!\|\hat{\theta}_i(t)\|^{2}\!-\!\mu_{2}e^{-\varpi_{2}t}
,\quad i \in {V_f},\label{equation:6.2}
\end{align}
where $\hat{\xi}_{i}(t)=\sum_{j \in {\mathcal{N}_{i}}} {w_{ij}({{\hat p}_{i}}(t) - {\hat p}_{j}(t))}$ is the combined state measurement, $\mu_{1}$, $\mu_{2}$, $\varpi_{1}$, $\varpi_{2}$ are positive constants, and the adaptive parameters $\gamma_{i}(t)$, $\varphi_{i}(t)$, $\phi_{i}(t)$ satisfy the update laws:
\begin{align}
&\dot{\gamma}_{i}(t)= \|{\epsilon_{p_{i}}}(t)\|^{2}, \quad \gamma_{i}(0)>0, \quad i \in {V_{l}},\label{equation:7.1}\\
&\dot{\varphi}_{i}(t)= \|{\epsilon_{p_{i}}}(t)\|^{2}, \quad \varphi_{i}(0)>0, \quad i \in {V_f},\label{equation:7.2}\\
&\dot{\phi}_{i}(t)= d_{i}(t)\|{\epsilon_{y_{i}}}(t)\|^{2}, \quad d_{i}(0)>1, \quad i \in {V_f}.\label{equation:7.3}
\end{align}

The time variable $t$ is omitted in the analysis for brevity. Substituting the ETC protocols (\ref{equation:2}) and (\ref{equation:3}) into MASs (\ref{equation:1}), then we have MASs in the compact form:
\begin{align} \label{equation:8}
\left\{\begin{aligned}
{\dot{p}_l}\, \!=&\!\ [I_{n_{l}}\!\!\otimes \!(A\!+\!BK)]p_l\!+\!(I_{n_{l}}\!\!\otimes\! B)[(I_{n_{l}}\!\!\otimes\! K)\epsilon_{p_{l}}\!\! +\!v_l], \\
{\dot{p}_f}\!=&\!\ [I_{n_{f}}\!\otimes\! (A\!+\!BK)]p_f\!+\!(I_{n_{f}}\!\otimes \!BK)\epsilon_{p_{f}} \!\\&+\!(I_{n_{f}}\!\otimes\! B)y_{f}\!+\!(I_{n_{f}}\!\otimes\! B)\epsilon_{y_{f}}, \\
{{\dot y}_f}\!= &\! - (D\!\otimes\! I_{d})\theta_{f}\!-\!(D\Omega_{ff}\!\otimes \!I_{d})\epsilon_{y_{f}},\\
{{\dot d}_{f}}\!= &\! \ [\theta_{f}\!+\!(\Omega_{ff}\!\otimes \!I_{d})\epsilon_{y_{f}}]^{T}[\theta_{f}\!+\!(\Omega_{ff}\!\otimes \!I_{d})\epsilon_{y_{f}}],
\end{aligned}\right.
\end{align}
where $D={\rm diag}\{d^{T}_{f}\}={\rm diag}\{d_{n_{l}+1}, \dots, d_{n}\}$, $v_l=[v^{T}_{1},\dots,v^{T}_{n_{l}}]^{T}$, $y_f=[y^{T}_{n_{l}+1},\dots,y^{T}_{n}]^{T}$, $\epsilon_{p}=[\epsilon_{p_l}^{T},\epsilon_{p_f}^{T}]^{T}$, $\epsilon_{y_f}$, $\theta_{f}$ are defined similarly, and $\theta_{i}=\sum_{j \in {V_{f}}} {w_{ij}({y_{i}}-y_{j})}+\sum_{j \in {V_{l}}} {w_{ij}({y_{i}} - v_{j})}$.

Then, we give the theorem stating the solution for affine formation with state accessibility and communication limited.

\begin{theorem}\label{th1}
It is assumed that Assumptions \ref{a1}-\ref{a2} and the pair $(A,B)$ is stabilizable hold. The affine formation problem of MASs (\ref{equation:1}) under the control scheme (\ref{equation:2})-(\ref{equation:3}) with the adaptive event-triggered strategy (\ref{equation:5})-(\ref{equation:7.3}) can be tackled if $(A+BK)p_i^{*}+Bv_i=0$ for $i\in V_{l}$ and the control gain $K=-B^{T}P$, where $P$ is a positive definite matrix satisfies the algebraic Riccati equation $PA+A^{T}P-PBB^{T}P+R_{1}=0$ and $R_{1}$ is a positive definite matrix.
\end{theorem}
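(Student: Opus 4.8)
The plan is to recast the closed loop in error coordinates, exploit the algebraic Riccati equation to obtain a common Lyapunov matrix, and then build a single Lyapunov function whose derivative is rendered negative with the help of the adaptive event-triggering conditions. First, using $\dot{p}^{*}_{l}=0$ and the compensation identity $(A+BK)p_i^{*}+Bv_i=0$, I would define the leader error $\tilde{p}_i=p_i-p_i^{*}$ and obtain $\dot{\tilde{p}}_i=(A+BK)\tilde{p}_i+BK\epsilon_{p_i}$. For the followers I would introduce the constant targets $y_f^{*}=-(\Omega^{-1}_{ff}\Omega_{fl}\otimes I_d)v_l$ and $p_f^{*}=-(\Omega^{-1}_{ff}\Omega_{fl}\otimes I_d)p_l^{*}$, and verify two facts from affine localizability (the Remark, valid since Assumptions \ref{a1}--\ref{a2} force $\Omega_{ff}$ to be nonsingular and positive definite): that $\theta_f=(\Omega_{ff}\otimes I_d)(y_f-y_f^{*})$, so $\theta_f$ vanishes exactly at $y_f^{*}$; and that $(A+BK)p_f^{*}+By_f^{*}=0$, making $p_f^{*}$ the unique equilibrium (unique because $A+BK$ is Hurwitz, hence invertible) of the follower position dynamics. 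In these coordinates the system (\ref{equation:8}) becomes $\dot{\bar{p}}_f=[I\otimes(A+BK)]\bar{p}_f+(I\otimes BK)\epsilon_{p_f}+(I\otimes B)(\tilde{y}_f+\epsilon_{y_f})$ and $\dot{\tilde{y}}_f=-(D\Omega_{ff}\otimes I_d)(\tilde{y}_f+\epsilon_{y_f})$, where $\bar{p}_f=p_f-p_f^{*}$ and $\tilde{y}_f=y_f-y_f^{*}$.

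Second, with $K=-B^{T}P$ the Riccati equation yields $P(A+BK)+(A+BK)^{T}P=-(R_1+PBB^{T}P)<0$, so $P$ is a common Lyapunov matrix for the leader and follower position channels. I would then propose $V=\sum_{i\in V_l}\tilde{p}_i^{T}P\tilde{p}_i+\sum_{i\in V_f}\bar{p}_i^{T}P\bar{p}_i+\tfrac12\tilde{y}_f^{T}(\Omega_{ff}\otimes I_d)\tilde{y}_f+\tfrac12\sum_{i\in V_f}(d_i-d^{*})^{2}$, where $d^{*}$ is a constant chosen sufficiently large; crucially, its existence follows only from $\Omega_{ff}>0$ and never requires its actual eigenvalues, which is precisely the claimed freedom from global information. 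Differentiating, the $\tilde{p}$-terms give $-\tilde{p}^{T}R_1\tilde{p}$ after completing the square, the $\tilde{y}$-term gives $-\sum_i d_i\|\theta_i\|^{2}$ plus an $\epsilon_{y_f}$ cross term, and the coupling term contributes $(d_i-d^{*})\|\hat{\theta}_i\|^{2}$. The key algebraic step is that, after writing $\hat{\theta}_f=\theta_f+(\Omega_{ff}\otimes I_d)\epsilon_{y_f}$, the dominant growth $\sum_i d_i\|\theta_i\|^{2}$ cancels against the coupling term, leaving the clean negative contribution $-d^{*}\|\hat{\theta}_f\|^{2}$ modulo residual $\epsilon_{y_f}$-weighted cross terms that the triggering threshold's $\phi_i d_i\|\epsilon_{y_i}\|^{2}$ entry is designed to absorb.

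Third, every remaining indefinite cross term (those carrying $\epsilon_{p_f}$, $\epsilon_{y_f}$, $\tilde{y}_f$, and the inter-channel couplings) would be bounded by Young's inequality so as to produce exactly the quantities $\|\epsilon_{p_i}\|^{2}$, $d_i\|\epsilon_{y_i}\|^{2}$, $\|\hat{\xi}_i\|^{2}$, $\|\hat{\theta}_i\|^{2}$ appearing in (\ref{equation:6.1})--(\ref{equation:6.2}). Invoking $f_i\le 0$ between events, namely $\gamma_i\|\epsilon_{p_i}\|^{2}\le\|\hat{p}_i-p_i^{*}\|^{2}+\mu_1 e^{-\varpi_1 t}$ for leaders and $\varphi_i\|\epsilon_{p_i}\|^{2}+\phi_i d_i\|\epsilon_{y_i}\|^{2}\le\|\hat{\xi}_i\|^{2}+\|\hat{\theta}_i\|^{2}+\mu_2 e^{-\varpi_2 t}$ for followers, together with the update laws (\ref{equation:7.1})--(\ref{equation:7.3}), lets me trade these measurement-error terms for the already available negative terms plus the integrable residuals $\mu_1 e^{-\varpi_1 t}$ and $\mu_2 e^{-\varpi_2 t}$. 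I expect to reach $\dot{V}\le -c_1\|\tilde{p}_l\|^{2}-c_2\|\bar{p}_f\|^{2}-c_3\|\hat{\theta}_f\|^{2}+g(t)$ with $g\in L_1$. Integrating shows $V$ is bounded, hence all states and all monotone adaptive gains $\gamma_i,\varphi_i,\phi_i,d_i$ converge to finite limits and the error terms are square-integrable; uniform continuity of the integrand (from boundedness of every signal) then lets Barbalat's lemma force $\tilde{p}_l\to0$, $\bar{p}_f\to0$, and $\hat{\theta}_f\to0$, whence $\tilde{y}_f\to0$ because $\Omega_{ff}$ is nonsingular. Translating back, this is exactly $p_l\to p_l^{*}$ and $p_f+(\Omega^{-1}_{ff}\Omega_{fl}\otimes I_d)p_l\to0$.

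Finally, for Zeno-freeness I would bound, on each interval $[t_k^{i},t_{k+1}^{i})$ where $\hat{p}_i$ and $\hat{y}_i$ are frozen, the growth of $\|\epsilon_{p_i}\|$ (resp. $\|\epsilon_{y_i}\|$) using the now-established boundedness of the right-hand sides of (\ref{equation:8}); since each threshold carries the strictly positive floor $\mu_j e^{-\varpi_j t}$, the error must grow from zero across a positive gap, giving a dwell time bounded below on any finite horizon and excluding accumulation of triggers. I expect the main obstacle to lie in the third step: organizing the cross-coupling among the leader-position, follower-position, and compensation-estimation channels so that the indefinite terms are dominated solely by the adaptively weighted triggering thresholds, while simultaneously verifying the self-consistent boundedness of the monotonically increasing adaptive gains. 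This is where the choice of $d^{*}$ and the cancellation identity must be arranged with care, and where avoiding any appeal to the unknown spectrum of $\Omega_{ff}$ is most delicate.
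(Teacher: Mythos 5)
Your overall architecture matches the paper's proof almost exactly: a Riccati-based quadratic in the position errors, the $(\Omega_{ff}\otimes I_d)$-weighted quadratic in the compensation-estimation error (your $\tfrac12\tilde{y}_f^{T}(\Omega_{ff}\otimes I_d)\tilde{y}_f$ is identical to the paper's $\alpha_2\theta_f^{T}(\Omega_{ff}^{-1}\otimes I_d)\theta_f$ since $\theta_f=(\Omega_{ff}\otimes I_d)\tilde{y}_f$), a deviation term $(d_i-d^{*})^{2}$ for the coupling gains, the triggering inequalities to trade measurement errors for state-dependent and integrable terms, and Barbalat to conclude. The only structural difference is that you work with the absolute follower error $\bar{p}_f=p_f-p_f^{*}$ while the paper uses the stress-weighted local error $\xi_f=(\Omega_{ff}\otimes I_d)p_f+(\Omega_{fl}\otimes I_d)p_l$; these differ by an invertible transformation plus a multiple of $e_{p_l}$, so either works, at the cost of slightly different bookkeeping of the $\|\Omega\|$-dependent analysis constants.

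There is, however, one concrete gap in the Lyapunov function as you wrote it: it contains no deviation terms for $\gamma_i$, $\varphi_i$, $\phi_i$. The triggering conditions (\ref{equation:6.1})--(\ref{equation:6.2}) only bound the \emph{weighted} quantities $\gamma_i\|\epsilon_{p_i}\|^{2}$ and $\varphi_i\|\epsilon_{p_i}\|^{2}+\phi_i d_i\|\epsilon_{y_i}\|^{2}$, whereas the cross terms produced by Young's inequality in $\dot{V}_1$ are of the form $c\,\|\epsilon_{p_i}\|^{2}$ and $c\,\|\epsilon_{y_i}\|^{2}$ with $c$ depending on $\|\Omega_{ff}\|$, $\|\Omega_{fl}\|$, $\|\Lambda\|$. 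To convert the weighted bound into an unweighted one you would need $\gamma_i(t)\geq c$, which is not guaranteed by $\gamma_i(0)>0$ alone and cannot be imposed without reintroducing exactly the global information the scheme is meant to avoid. The paper closes this by adding $\tfrac{\alpha_2}{2}(\gamma_i-\beta_4)^{2}$, $\tfrac{\alpha_2}{2}(\varphi_i-\beta_3)^{2}$, $\tfrac{\alpha_2}{2}(\phi_i-\beta_2)^{2}$ to $V$: differentiating and using the update laws (\ref{equation:7.1})--(\ref{equation:7.3}) yields $\gamma_i\|\epsilon_{p_i}\|^{2}-\beta_4\|\epsilon_{p_i}\|^{2}$ (and analogues), so the triggering condition absorbs the $\gamma_i$-weighted part while the free negative term $-\beta_4\|\epsilon_{p_i}\|^{2}$, with $\beta_2,\beta_3,\beta_4$ chosen large as analysis constants, dominates all the unweighted cross terms. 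You apply precisely this device to $d_i$ but not to the other three gains; without the analogous terms, step three of your argument does not close. The fix is mechanical (it is the same trick you already use), but it is essential, and it is also what makes the monotone gains provably bounded rather than merely assumed so.
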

\begin{proof}
Define leaders' state errors $e_{p_{l}}=p_{l}-p^{*}_{l}$ and the formation errors ${\xi}_{f}=[\xi^{T}_{n_{l}+1},\dots,\xi^{T}_{n}]^{T}$, where $\xi_{i}=\sum_{j \in {\mathcal{N}_{i}}} {w_{ij}({p}_{i} - p_{j})}$. Then, by $(A+BK)p_i^{*}+Bv_i=0$ for $i\in V_{l}$, the error dynamics are
\begin{align}\left\{\begin{aligned} \label{equation:9}
{\dot{e}_{p_{l}}}\!\!=&\!\ [I_{n_{l}}\!\otimes \!(A\!+\!BK)]e_{p_{l}}\!+\!(I_{n_{l}}\!\otimes \! BK)\epsilon_{p_{l}}, \\
{\dot{\xi}_{f}}\,\!\!=&\!\ [I_{n_{f}}\!\otimes\! (A\!+\!BK)]\xi_f\!+\!(\Omega_{ff}\!\otimes \!BK)\epsilon_{p_{f}} \!\\&+\!(\Omega_{fl}\!\otimes \!BK)\epsilon_{p_{l}}\!+\!(I_{n_{f}}\!\otimes\! B)\theta_{f} \!+\!(\Omega_{ff}\!\otimes\! B)\epsilon_{y_{f}}, \\
{{\dot \theta}_f}\,\! \!= & \!-\! (\Omega_{ff}D\otimes I_{d})\theta_{f}\!-\!(\Omega_{ff}D\Omega_{ff}\otimes I_{d})\epsilon_{y_{f}},\\
{\dot d}_{f}\ \!\!\!= &\! \ [\theta_{f}\!+\!(\Omega_{ff}\!\otimes \!I_{d})\epsilon_{y_{f}}]^{T}[\theta_{f}\!+\!(\Omega_{ff}\!\otimes \!I_{d})\epsilon_{y_{f}}].
\end{aligned}\right.\end{align}

Establish the first part of Lyapunov function as
\begin{align*}
V_{1} \!=&\ \! \alpha_{1}\xi_{f}^{T}(I_{n_{f}}\!\otimes\! P)\xi_{f}\!+ \!\alpha_{2}\theta_{f}^{T}(\Omega_{ff}^{-1}\!\otimes \!I_{d})\theta_{f} \!+\!\alpha_{3}e_{p_{l}}^{T}(I_{n_{l}}\!\otimes\! P)e_{p_{l}},
\end{align*}
where $\alpha_{1}>\frac{3\alpha_{2}}{\lambda_{\min}(R_{1})}$, $\alpha_{3}>\frac{2\alpha_{2}}{\lambda_{\min}(R_{1})}$, and $\alpha_{2}>0$.

By computing the derivative of $V_{1}$ along the trajectory of (\ref{equation:9}) and defining $\Lambda=PBB^{T}P$, it obtains
\begin{align}\label{equation:11}
\dot V_{1} 
\leq&\ \alpha_{1}\Big\{\xi_{f}^{T}\Big[I_{n_{f}}\otimes (PA+A^{T}P)-I_{n_{f}}\otimes \Lambda\Big]\xi_{f}\notag\\
&+4\theta_{f}^{T}\theta_{f}+4\epsilon_{p_{f}}^{T}(\Omega_{ff}^{2}\otimes \Lambda)\epsilon_{p_{f}} \notag\\
&+4\epsilon_{p_{l}}^{T}(\Omega^{T}_{fl}\Omega_{fl}\otimes \Lambda)\epsilon_{p_{l}}+4\epsilon_{y_{f}}^{T}(\Omega_{ff}^{2}\otimes I_{d})\epsilon_{y_{f}} \Big\}\notag\\
&-2\alpha_{2}\left[\theta_{f}^{T}(D\otimes I_{d})\theta_{f} +\theta_{f}^{T}(D\Omega_{ff}\otimes I_{d})\epsilon_{y_{f}}\right]\notag\\
&+\alpha_{3}\left\{e_{p_{l}}^{T}[I_{n_{l}}\otimes (PA+A^{T}P)-I_{n_{l}}\otimes \Lambda]e_{p_{l}}\right.\notag\\
&\left.+\ \epsilon_{p_{l}}^{T}(I_{n_{l}}\otimes \Lambda)\epsilon_{p_{l}} \right\}.
\end{align}

The second part of the Lyapunov function is chosen as 
\begin{align}
V_{2} =\alpha_{2}\sum\limits_{i=n_{l}+1}^{n}{\frac{(d_{i}-\beta_{1})^{2}}{2}},\notag
\end{align}
where $\beta_{1}>\max\{1,\frac{8\alpha_{1}}{\alpha_{2}}-1\}$. 

Define $\hat{\theta}_{f}=[\hat{\theta}_{n_{l}+1}^{T},\dots,\hat{\theta}_{n}^{T}]^{T}$.
It follows from (\ref{equation:9}) that
\begin{align}\label{equation:13}
\dot V_{2} \leq &\  \alpha_{2}\left\{ \theta_{f}^{T}(D\otimes I_{d})\theta_{f}+\epsilon^{T}_{y_{f}}(\Omega_{ff}D\Omega_{ff}\otimes I_{d})\epsilon_{y_{f}} \right.\notag\\
&\left.+2\theta_{f}^{T}(D\Omega_{ff}\otimes I_{d})\epsilon_{y_{f}}-\frac{(\beta_{1}-1)}{2}\theta_{f}^{T}\theta_{f} \right.\notag\\
&\left.+(\beta_{1}-1)\epsilon^{T}_{y_{f}}(\Omega_{ff}\Omega_{ff}\otimes I_{d})\epsilon_{y_{f}}- \hat{\theta}_{f}^{T}\hat{\theta}_{f} \right\}.
\end{align}

Then, denote the third part of the Lyapunov function as
\begin{align*}
V_{3} = & \, \alpha_{2} \!\sum\limits_{i=n_{l}+1}^{n}{\!\frac{(\phi_{i}-\beta_{2})^{2}+(\varphi_{i}-\beta_{3})^{2}}{2}}+\alpha_{2}\sum\limits_{i=1}^{n_{l}}{\frac{(\gamma_{i}-\beta_{4})^{2}}{2}},
\end{align*}
where $\beta_{2}>(\frac{4\alpha_{1}}{\alpha_{2}}+\beta_{1})\|\Omega^{2}_{ff}\|$, $\beta_{3}>3\|\Omega^{2}_{ff}\|+\frac{4\alpha_{1}}{\alpha_{2}}\|\Omega^{2}_{ff}\otimes \Lambda\|$, and $\beta_{4}>2+\frac{\alpha_{3}}{\alpha_{2}} \|\Lambda\|+3\|\Omega_{fl}^{T}\Omega_{fl}\|+\frac{4\alpha_{1}}{\alpha_{2}}\|\Omega_{fl}^{T}\Omega_{fl}\otimes  \Lambda\|$. 

Substituting (\ref{equation:7.1})-(\ref{equation:7.3}) into the derivative of $V_{3}$ yields
\begin{align}\label{equation:15}
\dot V_{3} \leq &\  \alpha_{2}\Big\{ \sum\limits_{i=n_{l}+1}^{n}{\Big[\varphi_{i}\|{\epsilon_{p_{i}}}\|^{2}+\phi_{i}d_{i}\|{\epsilon_{y_{i}}}\|^{2}\Big]}+ \sum\limits_{i=1}^{n_{l}}{ \gamma_{i}\|{\epsilon_{p_{i}}}\|^{2}} \notag\\
&-\beta_{2}\epsilon^{T}_{y_{f}}(D\otimes I_{d})\epsilon_{y_{f}}-\beta_{3}\|\epsilon_{p_{f}}\|^{2}-\beta_{4}\|\epsilon_{p_{l}}\|^{2}\Big\}.
\end{align}

Since $\hat{\xi}_{f}=\xi_{f}+(\Omega_{ff}\otimes I_{d})\epsilon_{p_{f}}+(\Omega_{fl}\otimes I_{d})\epsilon_{p_{l}}$, where $\hat{\xi}_{f}=[\hat{\xi}^{T}_{n_{l}+1},\dots,\hat{\xi}^{T}_{n}]^{T}$, utilizing the event-triggered conditions (\ref{equation:6.1}) and (\ref{equation:6.2}), it follows from (\ref{equation:15}) that
\begin{align}\label{equation:16}
\dot V_{3} \leq &\  \alpha_{2}\Big\{ \hat{\theta}_{f}^{T}\hat{\theta}_{f}+3\xi_{f}^{T}\xi_{f}+3\epsilon_{p_{f}}^{T}(\Omega_{ff}^{2}\otimes I_{d})\epsilon_{p_{f}}+\Pi(t)\notag\\
&+3\epsilon_{p_{l}}^{T}(\Omega_{fl}^{T}\Omega_{fl}\otimes I_{d})\epsilon_{p_{l}} +2e_{p_{l}}^{T}e_{p_{l}}+2\epsilon_{p_{l}}^{T}\epsilon_{p_{l}}\notag\\
&-\beta_{2}\epsilon^{T}_{y_{f}}(D\otimes I_{d})\epsilon_{y_{f}}\!-\beta_{3}\|\epsilon_{p_{f}}\|^{2}-\beta_{4}\|\epsilon_{p_{l}}\|^{2}\},
\end{align}
where $\Pi(t)=(\mu_{1}n_{l}+\mu_{2}n_{f})e^{-\min{\{\varpi_{1},\varpi_{2}\}}t}$.

Given the candidate Lyapunov function $V=V_{1}+V_{2}+V_{3}$. 
Since $P$ satisfies the algebraic Riccati equation and $d_{i}(0)>1$, $\dot{d}_{i}\geq0$, combining (\ref{equation:11}), (\ref{equation:13}) and (\ref{equation:16}) provides
\begin{align*}
\dot V \leq &-\alpha_{1}\Big[\lambda_{\min}(R_{1})-\frac{3\alpha_{2}}{\alpha_{1}}\Big]\xi_{f}^{T}\xi_{f} \notag\\
&-\alpha_{2}\Big[\frac{(\beta_{1}\!+\!1)}{2}\!-\!\frac{4\alpha_{1}}{\alpha_{2}}\Big]\theta_{f}^{T}\theta_{f}\!-\!\alpha_{3}\Big[\lambda_{\min}(R_{1})\!-\!\frac{2\alpha_{2}}{\alpha_{3}}\Big]e_{p_{l}}^{T}e_{p_{l}} \notag\\
&-\alpha_{2}\Big[\beta_{2}-\|\Omega_{ff}^{2}\|-\Big(\frac{4\alpha_{1}}{\alpha_{2}}+\beta_{1}-1\Big)\|\Omega_{ff}^{2}\|\Big]\epsilon^{T}_{y_{f}}\epsilon_{y_{f}}\notag\\
&-\alpha_{2}\Big(\beta_{3}-3\|\Omega_{ff}^{2}\|-\frac{4\alpha_{1}}{\alpha_{2}}\|\Omega_{ff}^{2}\otimes \Lambda\|\Big) \epsilon_{p_{f}}^{T}\epsilon_{p_{f}}\notag\\
&-\alpha_{2}\Big(\beta_{4}-2-\frac{\alpha_{3}}{\alpha_{2}} \|\Lambda\|-3\|\Omega_{fl}^{T}\Omega_{fl}\|\notag\\
&-\frac{4\alpha_{1}}{\alpha_{2}}\|\Omega_{fl}^{T}\Omega_{fl}\!\otimes \! \Lambda\|\Big)\epsilon_{p_{l}}^{T}\epsilon_{p_{l}}\!+\!\Pi(t).
\end{align*}

Given the conditions satisfied by the parameters $\alpha_{i}$ ($i=1,2,3$) and $\beta_{i}$ ($i=1,2,3,4$), it can deduce that
\begin{align}\label{equation:19}
\dot V \leq &-\!\bar{\alpha}_{1}\xi_{f}^{T}\xi_{f}\! -\!\bar{\alpha}_{2}\theta_{f}^{T}\theta_{f}  \!-\!\bar{\alpha}_{3}e_{p_{l}}^{T}e_{p_{l}}\!+\!\Pi(t)
\leq \Pi(t).
\end{align}
where $\bar{\alpha}_{1}=\alpha_{1}\lambda_{\min}(R_{1})-3\alpha_{2}$,  $\bar{\alpha}_{2}=\frac{\alpha_{2}(\beta_{1}+1)}{2}-4\alpha_{1}$, $\bar{\alpha}_{3}=\alpha_{3}\lambda_{\min}(R_{1})-2\alpha_{2}$ are positive constants.

Integrating both sides of the second inequality of (\ref{equation:19}) yields
\begin{align}\label{equation:20}
{V}(t) \le& \ V(0)+\int_{0}^{t} \Pi(\tau)d\tau,
\end{align}
which means that $V$ is bounded, that is, $\xi_{i}$, $\theta_{i}$, $d_{i}$, $\phi_{i}$, $\varphi_{i}$ are bounded for $i\in V_{f}$ and $e_{p_{i}}$, $\gamma_{i}$ are bounded for $i\in V_{l}$.
Then, from the first inequality of (\ref{equation:19}), it follows that 
\begin{align}\label{equation:21}
\int_{0}^{t}{g}d\tau 
\le-V(t)+ V(0)+\int_{0}^{t} \Pi(\tau)d\tau,
\end{align}
which implies that $\int_{0}^{t}gd\tau=\int_{0}^{t}(\bar{\alpha}_{1}\xi_{f}^{T}\xi_{f}+\bar{\alpha}_{2}\theta_{f}^{T}\theta_{f}  +\bar{\alpha}_{3}e_{p_{l}}^{T}e_{p_{l}})d\tau$ is bounded. Since the derivative of $g$ is bounded, then $g$ is uniformly continuous. Given the existence of $\lim_{t\rightarrow\infty}\int_{0}^{t}{g}d\tau$, then it follows from the generalized Barbalat lemma \cite{Farkas2016Variations} that $\lim_{t\rightarrow\infty}{g}=0$.

Due to $\bar{\alpha}_{1}$, $\bar{\alpha}_{2}$ and $\bar{\alpha}_{3}$ are positive constants, it derives that $\lim_{t\rightarrow\infty}{\xi_{f}}=0$, $\lim_{t\rightarrow\infty}{\theta_{f}}=0$ and $\lim_{t\rightarrow\infty}{e_{p_{l}}}=0$. It indicates that MASs can converge to the affine formation and the adaptive parameters converge to some finite constants by equations (\ref{equation:2}), (\ref{equation:3}), (\ref{equation:7.1})-(\ref{equation:7.3}). So far, the proof is finalized.
\end{proof}

\begin{remark}
In the above proof, the communication information, such as the stress matrix, is adopted in the designed Lyapunov function, including the parameters $\beta_{1}$, $\beta_{2}$, $\beta_{3}$, $\beta_{4}$. However, the controller and the event-triggered strategy do not pertain to it, which is the advantage of the proposed adaptive control algorithm compared to the work in \cite{Ma2023Event}.
\end{remark}

\begin{remark} \label{r3}
Here are some notes on the condition $(A+BK)p_i^{*}+Bv_i=0$ for $i\in V_{l}$.

(1) If the control input matrix $B$ is of full row rank, then there exists a corresponding compensation term $v_i$ for any target formation, in which case the design does not constrain the mechanical structure of the system or the target formation. It means that the control inputs are redundant, which is applicable to some models with redundant actuators. This assumption has been made in the study \cite{Wang2024Neural}.

(2) On the other hand, if $B$ is of full column rank, the condition is equivalent to that there exists an invertible matrix $U=[U^{T}_{1},U^{T}_{2}]^{T}$ with $U_{1}\in \mathbb{R}^{m\times d}$ and $U_{2}\in \mathbb{R}^{(d-m)\times d}$ such that $U_{1}B=I_{m}$ and $U_{2}B=0$, then $p^{*}_{i}$ satisfies $U_{2}Ap^{*}_{i}=0$ and $v_{i}=-U_{1}(A+BK)p^{*}_{i}$. Then, restrictions need to be placed on the system's mechanical structure or the actuators' physical structure or target formation. The idea of the compensation term is also commonly found in the time-varying formation control, which is reasonable.
\end{remark}
\begin{remark}
The main difference between the time-varying and affine formations is that the time-varying formation requires each agent to know its desired position relative to the center or the leader's position function. In contrast, for arbitrary affine transformations of the nominal formation, only the leader knows its own desired position and the other followers can follow. In this paper, the follower's $y_{i}$ converges to the desired compensation term using an estimator approach.
\end{remark}

Next, it verifies the feasibility of the event-triggered strategy.
\begin{theorem}\label{th2}
Under Theorem \ref{th1} holds, the MASs are without Zeno behavior occurring.
\end{theorem}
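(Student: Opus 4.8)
The plan is to rule out Zeno behavior by proving that for every agent $i$ and every $k$ the inter-event interval $\Delta_k^i:=t_{k+1}^i-t_k^i$ is bounded below by a strictly positive quantity on any finite horizon, so that infinitely many triggers cannot accumulate at a finite time. The starting observation is that at each triggering instant the measurement errors are reset: $\epsilon_{p_i}(t_k^i)=0$ for $i\in V$, and additionally $\epsilon_{y_i}(t_k^i)=0$ for $i\in V_f$. On the open interval $(t_k^i,t_{k+1}^i)$ the held samples $\hat p_i$ and $\hat y_i$ are constant, hence $\dot\epsilon_{p_i}=-\dot p_i$ and $\dot\epsilon_{y_i}=-\dot y_i$.

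First I would establish that $\|\dot p_i\|$ and $\|\dot y_i\|$ are uniformly bounded. By Theorem \ref{th1} all closed-loop signals are bounded: $e_{p_l}$ and $\gamma_i$ for leaders, and $\xi_f$, $\theta_f$, $d_i$, $\varphi_i$, $\phi_i$ for followers; together with the constancy of $p_l^*$ and the nonsingularity of $\Omega_{ff}$, this gives boundedness of $p_l$ and $p_f$, hence of the inputs $u_i$ in (\ref{equation:2})--(\ref{equation:3}) and of $\dot y_i=-d_i\hat\theta_i$. Consequently there exist constants $M_{p_i},M_{y_i}>0$ with $\|\dot\epsilon_{p_i}\|\le M_{p_i}$ and $\|\dot\epsilon_{y_i}\|\le M_{y_i}$, and since the errors start from zero at $t_k^i$, integration yields the linear growth bounds $\|\epsilon_{p_i}(t)\|\le M_{p_i}(t-t_k^i)$ and $\|\epsilon_{y_i}(t)\|\le M_{y_i}(t-t_k^i)$ on the interval.

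Next I would convert the triggering rule into a lower bound on the error needed to trigger. A trigger at $t_{k+1}^i$ forces $f_i(t_{k+1}^i)=0$; discarding the nonnegative terms $\|\hat p_i-p_i^*\|^2$, $\|\hat\xi_i\|^2$, $\|\hat\theta_i\|^2$ and upper-bounding the monotone, bounded adaptive parameters by constants $\bar\gamma_i,\bar\varphi_i,\bar\phi_i,\bar d_i$ leaves, for a leader,
\begin{equation*}
\bar\gamma_i\,\|\epsilon_{p_i}(t_{k+1}^i)\|^2\ \ge\ \mu_1 e^{-\varpi_1 t_{k+1}^i},
\end{equation*}
and, for a follower,
\begin{equation*}
\bar\varphi_i\,\|\epsilon_{p_i}(t_{k+1}^i)\|^2+\bar\phi_i\bar d_i\,\|\epsilon_{y_i}(t_{k+1}^i)\|^2\ \ge\ \mu_2 e^{-\varpi_2 t_{k+1}^i}.
\end{equation*}
Substituting the linear growth bounds and solving for $\Delta_k^i$ produces strictly positive lower bounds of the form $\Delta_k^i\ge c_i\,e^{-\varpi t_{k+1}^i/2}$ for suitable constants $c_i>0$ and $\varpi\in\{\varpi_1,\varpi_2\}$. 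It is precisely the persistent offsets $\mu_1 e^{-\varpi_1 t}$ and $\mu_2 e^{-\varpi_2 t}$ in (\ref{equation:6.1})--(\ref{equation:6.2}) that keep $f_i$ strictly negative just after each reset and thereby guarantee this positive gap.

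Finally, suppose toward a contradiction that Zeno behavior occurs for some agent, i.e. $t_k^i\to T$ for a finite $T$ as $k\to\infty$; then $\Delta_k^i\to 0$, yet for every $k$ we have $t_{k+1}^i\le T$ and hence $\Delta_k^i\ge c_i e^{-\varpi T/2}>0$, a contradiction. Thus every agent admits a uniform positive lower bound on its inter-event times over any finite horizon and no Zeno behavior arises. The main obstacle is the second step: bounding $\|\dot p_i\|$ and $\|\dot y_i\|$ cleanly requires assembling the boundedness of states, adaptive gains, and compensation estimates from Theorem \ref{th1}, and for followers correctly handling the coupling of the two error channels $\epsilon_{p_i}$ and $\epsilon_{y_i}$ through the adaptive gain $d_i$ and the neighbor-dependent terms $\hat\xi_i$, $\hat\theta_i$.
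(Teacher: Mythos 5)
Your proposal is correct and follows essentially the same strategy as the paper's proof: the measurement errors reset to zero at each trigger, the exponential offsets $\mu_1e^{-\varpi_1 t}$, $\mu_2e^{-\varpi_2 t}$ force the (adaptively weighted) errors to reach a strictly positive level before the next trigger, and a bound on the error growth rate then yields a positive inter-event time of the form $c_ie^{-\varpi t_{k+1}^i/2}$, exactly paralleling the paper's bounds (\ref{equation:25}) and (\ref{equation:29}). The only difference is technical: the paper derives a differential inequality for $\gamma_i\|\epsilon_{p_i}\|^2$ and integrates it (a Gr\"onwall-type comparison giving $\tfrac{\Delta_{1i}}{\bar A_{1i}}(e^{\bar A_{1i}\Delta t}-1)$), whereas you use the boundedness conclusions of Theorem \ref{th1} to get a simpler linear growth bound $\|\epsilon_{p_i}(t)\|\le M_{p_i}(t-t_k^i)$; both are valid and lead to the same conclusion.
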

\begin{proof}
First to prove each leader's time interval is strictly
more than zero, taking the time derivative for $\gamma_{i}\|{\epsilon_{p_{i}}}\|^{2}$ over the interval $[t_{k}^{i},t_{k+1}^{i})$ yields
\begin{align}\label{equation:22}
\frac{d(\gamma_{i}\|{\epsilon_{p_{i}}}\|^{2})}{dt}=&\ \dot{\gamma}_{i}\|{\epsilon_{p_{i}}}\|^{2}\!+2\gamma_{i}\epsilon^{T}_{p_{i}}(A\epsilon_{p_{i}}\!-(A+\!BK)\hat{p}_{i}\!-Bv_{i})\notag\\
\leq&\ \Big(\frac{\dot{\gamma}_{i}}{\gamma_{i}}\!+\!2\|A\|\!+\!2\Big)\gamma_{i}\|{\epsilon_{p_{i}}}\|^{2}\!+\!\gamma_{i}\|(A\!+\!BK)\hat{p}_{i}\|^{2}\notag\\
&+\gamma_{i}\|Bv_{i}\|^{2}\notag\\
\leq&\ \bar{A}_{1i}\gamma_{i}\|{\epsilon_{p_{i}}}\|^{2}+\Delta_{1i},
\end{align}
where $\bar{A}_{1i}=\max_{t\in[t_{k}^{i},t_{k+1}^{i})}\big\{\frac{\dot{\gamma}_{i}}{\gamma_{i}}\!+2\|A\|\!+2\big\}$ and $\Delta_{1i}=\max_{t\in[t_{k}^{i},t_{k+1}^{i})}\{\gamma_{i}\|(A+\!BK)\hat{p}_{i}\|^{2}+\gamma_{i}\|Bv_{i}\|^{2}\}$.

Upon triggering the next event, the triggering function (\ref{equation:6.1}) should be set to zero and integrating  (\ref{equation:22}) can obtain that
\begin{align*}
\|p_i(t_{k+1}^{i})-p_i^{*}\|^{2}+&\ \mu_{1}e^{-\varpi_{1}t_{k+1}^{i}}
\leq\frac{\Delta_{1i}}{\bar{A}_{1i}}(e^{\bar{A}_{1i}(t_{k+1}^{i}-t_{k}^{i})}-1).
\end{align*}
Following $\|p_i(t_{k+1}^{i})-p_i^{*}\|^{2}\geq0$, it is derived that
\begin{align}\label{equation:25}
t_{k+1}^{i}-t_{k}^{i}
\geq&\ \frac{1}{\bar{A}_{1i}}\ln(\frac{\mu_{1}\bar{A}_{1i}}{\Delta_{1i}}e^{-\varpi_{1}t_{k+1}^{i}}+1)>0.
\end{align}

Then, a similar derivation for each follower yields
\begin{align*}
\frac{d(\varphi_{i}\|{\epsilon_{p_{i}}}\!\|^{2}\!+\!\!\phi_{i}d_{i}\|{\epsilon_{y_{i}}}\!\|^{2})}{dt}\leq \bar{A}_{2i}(\varphi_{i}\|{\epsilon_{p_{i}}}\!\|^{2}\!+\!\phi_{i}d_{i}\|{\epsilon_{y_{i}}}\!\|^{2})\!+\!\Delta_{2i},
\end{align*}
where $\bar{A}_{2i}=\max_{t\in[t_{k}^{i},~t_{k+1}^{i})}\big\{\frac{\dot{\varphi}_{i}}{\varphi_{i}}+2\|A\|+2,\dot{\phi}_{i}+\dot{d}_{i}+1\big\}$ and $\Delta_{2i}=\max_{t\in[t_{k}^{i},t_{k+1}^{i})}\{\phi_{i}d^{3}_{i}\|\hat{\theta}_i\|^{2}+\varphi_{i}\|(A+BK)\hat{p}_{i}\|^{2}+\varphi_{i}\|B\hat{y}_{i}\|^{2}\}$,
and then we obtain that
\begin{align}\label{equation:29}
t_{k+1}^{i}-t_{k}^{i}
\geq&\ \frac{1}{\bar{A}_{2i}}\ln(\frac{\mu_{2}\bar{A}_{2i}}{\Delta_{2i}}e^{-\varpi_{2}t_{k+1}^{i}}+1)>0.
\end{align}

Thus, the inter-event time interval is strictly more than zero by (\ref{equation:25}) and (\ref{equation:29}). The proof is completed.
\end{proof}

\subsection{Output-based Event-triggered Affine Formation \label{Output-based Event-triggered Affine Formation}}
%%%%%%%%%%%%%%%%%%%%%%%%%%%%%%%%%%%%%%%%%%%%%%%%%%%%%%%
In real-world scenarios, difficulty in measuring information renders the unavailability and transmission of the agent's state. Hence, based on the available output information, the observer-based distributed ETC protocol is proposed:
\begin{align}\label{equation:40}
\left\{\begin{aligned}
&{{\dot z}_i}(t) \!= Az_i(t)\!+\!Bu_{i}(t)\!+\!F(C{z_i}(t)\!\!-{q_i}(t)), \quad i \in {V},\\
&{u_i}(t)\! = K\hat{z}_i(t)\!+\!{v_i}, \quad i \in {V_l},\\
&{u_i}(t)\!=K\hat{z}_i(t)\!+\!\hat{y}_i(t),\quad i \in {V_f}, \\
&{{\dot y}_i}(t)\! =  - {d_{i}(t)}\hat{\theta}_{i}(t),\quad  {\dot d}_{i}(t)\! =  \hat{\theta}_{i}^{T}(t)\hat{\theta}_{i}(t)  ,\quad i \in {V_f},
\end{aligned} \right.
\end{align}
where ${z_i}(t)$ is adopted to observe the state of each agent, $F\in\mathbb{R}^{d\times r}$ denotes an observer gain matrix such that $A+FC$ is Hurwitz, $\hat{z}_i(t)=z_i(t^{i}_{k})$, and $y_{i}(t)$, $\hat{\theta}_{i}(t)$, $d_{i}(t)$ for $i\in V_{f}$, $v_i$ for $i\in V_{l}$ are designed in Section \ref{State-based Event-triggered Affine Formation}.

By defining the measurement errors of observed state and estimated compensation of $i$-th agent as ${\epsilon_{z_{i}}}(t) = \hat{z}_i(t)-z_i(t)$ for $i \in {V}$, and ${\epsilon_{y_{i}}}(t)  = \hat{y}_i(t)-y_i(t)$ for $i \in {V_f}$, then the triggering function for leaders and followers are redefined by
\begin{align}
f_{i}(t)\!=&\gamma_{i}(t)\|{\epsilon_{z_{i}}}(t)\|^{2}\!-\!\|\hat{z}_i(t)\!-\!p_i^{*}\|^{2}\!-\!\mu_{3}e^{-\varpi_{3}t}, \,i \in {V_{l}}, \label{equation:42.1} \\
f_{i}(t)\!=&\varphi_{i}(t)\|{\epsilon_{z_{i}}}(t)\|^{2}\!+\!\phi_{i}(t)d_{i}(t)\|{\epsilon_{y_{i}}}(t)\|^{2} \!-\!\|\hat{\eta}_{i}(t)\|^{2}\notag\\
&-\|\hat{\theta}_i(t)\|^{2}\!-\!\mu_{4}e^{-\varpi_{4}t}
,\quad i \in {V_f},\label{equation:42.2}
\end{align}
where $\hat{\eta}_{i}(t)=\sum_{j \in {\mathcal{N}_{i}}} {w_{ij}({{\hat z}_{i}}(t) - {\hat z}_{j}(t))}$ is the combined observed-state measurement, $\mu_{3}$, $\mu_{4}$, $\varpi_{3}$, $\varpi_{4}>0$, and adaptive parameters $\gamma_{i}(t)$, $\varphi_{i}(t)$, $\phi_{i}(t)$ satisfy 
\begin{align}
&\dot{\gamma}_{i}(t)= \|{\epsilon_{z_{i}}}(t)\|^{2}, \quad \gamma_{i}(0)>0, \quad i \in {V_{l}},\label{equation:43.1}\\
&\dot{\varphi}_{i}(t)= \|{\epsilon_{z_{i}}}(t)\|^{2}, \quad \varphi_{i}(0)>0, \quad i \in {V_f},\label{equation:43.2}\\
&\dot{\phi}_{i}(t)= d_{i}(t)\|{\epsilon_{y_{i}}}(t)\|^{2}, \quad d_{i}(0)>1, \quad i \in {V_f},\label{equation:43.3}
\end{align}

The time variable $t$ is omitted. By taking the control protocol (\ref{equation:40}) into MASs (\ref{equation:1}), then we obtain the closed-loop system in the compact form:
\begin{align} \label{equation:44}
\left\{\begin{aligned}
{\dot{p}_l}\,\!\!=&\ \!(I_{n_{l}}\!\otimes \!A)p_l\!+\!(I_{n_{l}}\!\otimes\! BK)(z_{l} \! +\!\epsilon_{z_{l}})\!+\!(I_{n_{l}}\!\otimes\! B)v_l, \\
{\dot{z}_l}\,\!=&\ \![I_{n_{l}}\!\otimes\! (A+BK)]z_l\!+\!(I_{n_{l}}\!\otimes \!BK)\epsilon_{z_{l}} \\
&+(I_{n_{l}}\otimes B)v_l+(I_{n_{l}}\otimes FC)\delta_{l}, \\
{\dot{p}_f}\!\!=&\ \!(I_{n_{f}}\!\otimes \!A)p_f\!+\!(I_{n_{f}}\!\otimes\! BK)z_{f} \!+\!(I_{n_{f}}\!\otimes \!BK)\epsilon_{z_{f}}\\
&+(I_{n_{f}}\otimes B)y_{f}+(I_{n_{f}}\otimes B)\epsilon_{y_{f}}, \\
{\dot{z}_f}\!\!=&\ \! [I_{n_{f}}\!\otimes \!(A+BK)]z_f\!+\!(I_{n_{f}}\!\otimes\! BK)\epsilon_{z_{f}}\\
&+(I_{n_{f}}\!\otimes\! B)y_{f}\!+\!(I_{n_{f}}\!\otimes\! B)\epsilon_{y_{f}}\!+\!(I_{n_{f}}\!\otimes\! FC)\delta_{f}, \\
{{\dot y}_f}\!\!= &\! - (D\!\otimes\! I_{d})\theta_{f}-(D\Omega_{ff}\!\otimes \!I_{d})\epsilon_{y_{f}},\\
{\dot d_{f}}\ \! \!\!= & \!\ [\theta_{f}\!+\!(\Omega_{ff}\!\otimes \!I_{d})\epsilon_{y_{f}}]^{T}[\theta_{f}\!+\!(\Omega_{ff}\!\otimes \!I_{d})\epsilon_{y_{f}}],
\end{aligned}\right.\end{align}
where the linked observed states $z=[z_l^{T},z_f^{T}]^{T}=[z_1^{T},\dots,z_{n_{l}}^{T},z_{n_{l}+1}^{T},\dots,z_{n}^{T}]^{T}$, the measurement errors $\epsilon_{z}=[\epsilon_{z_l}^{T},\epsilon_{z_f}^{T}]^{T}$, the observed errors $\delta=[\delta^{T}_{l},\delta^{T}_{f}]^{T}=z-p$, and $D$, $d_{f}$, $y_f$, $\epsilon_{p}$, $\epsilon_{y_f}$, $\theta_{f}$, $v_l$ are defined in Section \ref{State-based Event-triggered Affine Formation}.

Next, the results regarding affine formation with state unavailability and limited communication are shown below.

\begin{theorem}\label{th3}
Suppose that Assumptions \ref{a1}-\ref{a2} as well as the stabilizable and detectable triple $(A,B,C)$ hold. The affine formation problem of MASs (\ref{equation:1}) under the control scheme (\ref{equation:40}) with the adaptive event-triggered strategy (\ref{equation:5}), (\ref{equation:42.1})-(\ref{equation:43.3}) can be solved if $(A+BK)p_i^{*}+Bv_i=0$ for $i\in V_{l}$ and the control gain matrices $K=-B^{T}P$ and  $F=-QC^{T}$, where $P$ and $Q$ are positive definite matrices satisfy the algebraic Riccati equations $PA+A^{T}P-PBB^{T}P+R_{1}=0$ and $QA^{T}+AQ-QC^{T}CQ+R_{2}=0$, and $R_{1}$, $R_{2}$ are positive definite matrices with $\lambda_{\min}(R_{1})> 1/2$.
\end{theorem}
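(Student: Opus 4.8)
The plan is to reproduce the argument of Theorem~\ref{th1} but carried out in the \emph{observed} coordinates $z$, since the closed-loop equations for $z_l$ and $z_f$ in (\ref{equation:44}) have exactly the structure of the state dynamics (\ref{equation:8}) up to the extra forcing term $(I\otimes FC)\delta$. Convergence in $z$, together with vanishing of the observer error $\delta=z-p$, will then be transferred to the true states $p$. Accordingly, I would first introduce the observer-based analogues of the errors of Theorem~\ref{th1}: the leaders' tracking error $e_{z_l}=z_l-p_l^{*}$, the observed formation error $\eta_f$ with $\eta_i=\sum_{j\in\mathcal{N}_i}w_{ij}(z_i-z_j)$, the quantities $\theta_f,d_f$ as before, and additionally the observer error $\delta$.

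Second, I would exploit the decoupled observer dynamics. Subtracting $\dot p_i=Ap_i+Bu_i$ from the observer equation in (\ref{equation:40}) gives $\dot\delta=[I_n\otimes(A+FC)]\delta$, which is independent of every event-triggering quantity. Choosing the extra Lyapunov term $V_4=\alpha_4\,\delta^{T}(I_n\otimes Q^{-1})\delta$ and using $F=-QC^{T}$ with the second Riccati equation pre- and post-multiplied by $Q^{-1}$, which yields $Q^{-1}A+A^{T}Q^{-1}=C^{T}C-Q^{-1}R_2Q^{-1}$, gives $\dot V_4=-\alpha_4\,\delta^{T}[I_n\otimes(C^{T}C+Q^{-1}R_2Q^{-1})]\delta$. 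This strictly negative definite term will serve as the reservoir that absorbs the coupling introduced by $\delta$.

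Third, I would form $V=V_1'+V_2+V_3+V_4$, where $V_1',V_2,V_3$ are obtained from the proof of Theorem~\ref{th1} by replacing $\xi_f,e_{p_l}$ and the state measurement errors with their observed counterparts $\eta_f,e_{z_l},\epsilon_{z_\cdot}$, and differentiate along (\ref{equation:44}). The bulk of $\dot V$ reproduces the bound (\ref{equation:19}), contributing $-\bar\alpha_1\eta_f^{T}\eta_f-\bar\alpha_2\theta_f^{T}\theta_f-\bar\alpha_3 e_{z_l}^{T}e_{z_l}+\Pi(t)$; the genuinely new contributions are the cross terms between $\delta$ and $(e_{z_l},\eta_f)$ produced by the forcing $(I\otimes FC)\delta$ in $\dot z_l$ and $\dot z_f$. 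Each such cross term I would bound by Young's inequality, splitting it into a small multiple of $e_{z_l}^{T}e_{z_l}$ (respectively $\eta_f^{T}\eta_f$) plus a multiple of $\delta^{T}\delta$, and then fix $\alpha_4$ large enough that the $\delta^{T}\delta$ remainders are dominated by the negative definite $\dot V_4$.

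The main obstacle is the attendant constant bookkeeping. The Young's-inequality splits that tame the $\delta$--$e_{z_l}$ and $\delta$--$\eta_f$ couplings each charge a term proportional to $e_{z_l}^{T}e_{z_l}$ or $\eta_f^{T}\eta_f$ against the stability margins $\alpha_3\lambda_{\min}(R_1)$ and $\alpha_1\lambda_{\min}(R_1)$ already consumed in Theorem~\ref{th1}; this erosion of the margin is precisely what forces the sharpened hypothesis $\lambda_{\min}(R_1)>1/2$, and I must verify that after these charges the coefficients of $\eta_f^{T}\eta_f$, $\theta_f^{T}\theta_f$, $e_{z_l}^{T}e_{z_l}$ and $\delta^{T}\delta$ all remain strictly positive for a suitable $\alpha_4$ and the $\beta_i$. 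Once $\dot V\le -c(\eta_f^{T}\eta_f+\theta_f^{T}\theta_f+e_{z_l}^{T}e_{z_l}+\delta^{T}\delta)+\Pi(t)$ is established, the integrability of $\Pi$ and the generalized Barbalat lemma give $\eta_f,\theta_f,e_{z_l},\delta\to0$ exactly as before. Finally I would transfer back to the physical states: $\delta\to0$ and $e_{z_l}\to0$ yield $p_l=z_l-\delta_l\to p_l^{*}$, while $\eta_f\to0$ together with $\delta\to0$ forces the true formation error $\xi_f\to0$, so the followers realize the affine formation and the proof is complete.
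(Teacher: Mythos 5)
Your proposal is correct and follows essentially the same route as the paper's proof: observed-coordinate analogues of the Theorem~\ref{th1} errors, the decoupled observer dynamics $\dot\delta=[I_n\otimes(A+FC)]\delta$ handled by the extra term $V_4=\alpha_4\delta^{T}(I_n\otimes Q^{-1})\delta$, Young's-inequality absorption of the $\delta$--$(\eta_f,e_{z_l})$ couplings into a large $\alpha_4$ (which is exactly where the paper's $\tfrac12 I$ terms and the hypothesis $\lambda_{\min}(R_1)>1/2$ come from), and the generalized Barbalat lemma. Your treatment of the output Riccati equation with general $R_2$ is in fact slightly cleaner than the paper's, which effectively fixes $S=C^{T}C+Q^{-1}Q^{-1}$ (i.e., $R_2=I_d$), but this is a cosmetic rather than a substantive difference.
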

\begin{proof}
Define the formation errors ${\eta}_{f}=[\eta^{T}_{n_{l}+1},\dots,\eta^{T}_{n}]^{T}$, where $\eta_{i}=\sum_{j \in {\mathcal{N}_{i}}} {w_{ij}({z}_{i}- z_{j})}$. Denote $e_{z_{l}}=z_{l}-p^{*}_{l}$ as the differences between observed and desired states for leaders.

Combing the condition $(A+BK)p_i^{*}+Bv_i=0$ for $i\in V_{l}$ yields the error dynamics as
\begin{align}\left\{\begin{aligned} \label{equation:45}
\dot{\delta}\ \ =&\ [I_{n}\otimes (A+FC)]\delta,\\
{\dot{e}_{z_{l}}}=&\ [I_{n_{l}}\!\otimes\! (A\!+\!BK)]e_{z_{l}}\!+\!(I_{n_{l}}\!\otimes\! BK)\epsilon_{z_{l}} \\
&+(I_{n_{l}}\otimes FC)\delta_{l}, \\
{\dot{\eta}_{f}}\,=&\ [I_{n_{f}}\!\otimes\! (A\!+\!BK)]\eta_f\!+\!(\Omega_{ff}\!\otimes\! BK)\epsilon_{z_{f}} \\&+\!(\Omega_{fl}\!\otimes\! BK)\epsilon_{z_{l}}
+(I_{n_{f}}\otimes B)\theta_{f}\\&+(\Omega_{ff}\otimes B)\epsilon_{y_{f}}+(\Omega_{ff}\otimes FC)\delta_{f} \\
&+(\Omega_{fl}\otimes FC)\delta_{l}, \\
{{\dot \theta}_f}\, = & - [\Omega_{ff}D\otimes I_{d}]\theta_{f}-[\Omega_{ff}D\Omega_{ff}\otimes I_{d}]\epsilon_{y_{f}},\\
{\dot{d}_{f}}\,= & \ \theta_{f}^{T}\theta_{f}+\epsilon^{T}_{y_{f}}(\Omega_{ff}\Omega_{ff}\otimes I_{d})\epsilon_{y_{f}}\\
&+2\theta_{f}^{T}(\Omega_{ff}\otimes I_{d})\epsilon_{y_{f}}.
\end{aligned}\right.\end{align}

Establish the candidate Lyapunov function as
\begin{align}\label{equation:46}
V=&\ V_{1}+V_{2}+V_{3}+V_{4},\\
V_{1} =&\  \alpha_{1}\eta_{f}^{T}(I_{n_{f}}\otimes P)\eta_{f}+ \alpha_{2}\theta_{f}^{T}(\Omega_{ff}^{-1}\otimes I_{d})\theta_{f} \notag\\
&+\alpha_{3}e_{z_{l}}^{T}(I_{n_{l}}\otimes P)e_{z_{l}},\notag\\
V_{2} = & \ \alpha_{2} \sum\limits_{i=n_{l}+1}^{n}{\frac{(d_{i}-\beta_{1})^{2}}{2}},\notag\\
V_{3} = & \ \alpha_{2}\sum\limits_{i=n_{l}+1}^{n}{\frac{(\phi_{i}-\beta_{2})^{2}}{2}}+\alpha_{2}\sum\limits_{i=n_{l}+1}^{n}{\frac{(\varphi_{i}-\beta_{3})^{2}}{2}}\notag\\
&+\alpha_{2}\sum\limits_{i=1}^{n_{l}}{\frac{(\gamma_{i}-\beta_{4})^{2}}{2}},\notag\\
V_{4} = & \ \alpha_{4}\delta(t)^{T}(I_{n}\otimes Q^{-1})\delta(t),\notag
\end{align}
where $\alpha_{1}>\frac{6\alpha_{2}}{2\lambda_{\min}(R_{1})-1}$, $\alpha_{2}>0$, $\alpha_{3}>\frac{4\alpha_{2}}{2\lambda_{\min}(R_{1})-1}$, $\alpha_{4}>\max\big\{\frac{4\alpha_1\|\Omega_{ff}^{2}\|\|PFC\|^{2}}{\lambda_{\min}(S)},
\frac{2\alpha_{3}\|PFC\|^{2}+4\alpha_1\|\Omega_{fl}^{T}\Omega_{fl}\|\|PFC\|^{2}}{\lambda_{\min}(S)}\big\}$, $\beta_{4}>2+\frac{\alpha_{3}}{\alpha_{2}} \|\Lambda\|+3\|\Omega_{fl}^{T}\Omega_{fl}\|+\frac{4\alpha_{1}}{\alpha_{2}}\|\Omega_{fl}^{T}\Omega_{fl}\otimes \Lambda\|$, $\beta_{3}>3\|\Omega_{ff}^{2}\|+\frac{4\alpha_{1}}{\alpha_{2}}\|\Omega_{ff}^{2}\otimes \Lambda\|$, $\beta_{2}>(\frac{4\alpha_{1}}{\alpha_{2}}+\beta_{1})\|\Omega^{2}_{ff}\|$, $\beta_{1}>\frac{8\alpha_{1}}{\alpha_{2}}-1$, and $S$ is a positive definite matrix to be designed later.

First, computing the derivative of $V_{1}$ along the trajectory of (\ref{equation:45}), it obtains
\begin{align}\label{equation:47}
\dot V_{1} =&\ \alpha_{1}\left\{\eta_{f}^{T}[I_{n_{f}}\otimes (PA+A^{T}P)-2I_{n_{f}}\otimes \Lambda]\eta_{f}\right.\notag\\
&\left.+2\eta_{f}^{T}(I_{n_{f}}\otimes PB)\theta_{f}-2\eta_{f}^{T}(\Omega_{ff}\otimes \Lambda)\epsilon_{z_{f}}\right.\notag\\
&\left.-2\eta_{f}^{T}(\Omega_{fl}\otimes \Lambda)\epsilon_{z_{l}}+2\eta_{f}^{T}(\Omega_{ff}\otimes PB)\epsilon_{y_{f}}\right.\notag\\
&\left.+2\eta_{f}^{T}(\Omega_{ff}\otimes PFC)\delta_{f}+2\eta_{f}^{T}(\Omega_{fl}\otimes PFC)\delta_{l}\right\}\notag\\
&-2\alpha_{2}\left[\theta_{f}^{T}(D\otimes I_{d})\theta_{f} +\theta_{f}^{T}(D\Omega_{ff}\otimes I_{d})\epsilon_{y_{f}}\right]\notag\\
&+\alpha_{3}\left\{e_{z_{l}}^{T}[I_{n_{l}}\otimes (PA+A^{T}P)-2I_{n_{l}}\otimes \Lambda]e_{z_{l}}\right.\notag\\
&\left.-2e_{z_{l}}^{T}(I_{n_{l}}\otimes \Lambda)\epsilon_{z_{l}} +2e_{z_{l}}^{T}(I_{n_{l}}\otimes PFC)\delta_{l}\right\}\notag\\
\leq&\ \alpha_{1}\Big\{\eta_{f}^{T}\Big[I_{n_{f}}\otimes (PA\!+\!A^{T}P)\!-\!I_{n_{f}}\otimes \Lambda\!+\!\frac{1}{2}I_{dn_{f}}\Big]\eta_{f}\notag\\
&+4\theta_{f}^{T}\theta_{f}+4\epsilon_{z_{f}}^{T}(\Omega_{ff}^{2}\otimes \Lambda)\epsilon_{z_{f}} +4\epsilon_{z_{l}}^{T}(\Omega_{fl}^{T}\Omega_{fl}\otimes \Lambda)\epsilon_{z_{l}} \notag\\
&+4\epsilon_{y_{f}}^{T}(\Omega_{ff}^{2}\otimes I_{d})\epsilon_{y_{f}}\!+\!4\delta_{f}^{T}(\Omega_{ff}^{2}\otimes C^{T}F^{T}PPFC)\delta_{f}\notag\\
&+4\delta_{l}^{T}(\Omega_{fl}^{T}\Omega_{fl}\otimes C^{T}F^{T}PPFC)\delta_{l} \Big\}\notag\\
&-2\alpha_{2}\left[\theta_{f}^{T}(D\otimes I_{d})\theta_{f} +\theta_{f}^{T}(D\Omega_{ff}\otimes I_{d})\epsilon_{y_{f}}\right]\notag\\
&+\alpha_{3}\Big\{e_{z_{l}}^{T}\Big[I_{n_{l}}\otimes (PA\!+\!A^{T}P)\!-\!I_{n_{l}}\otimes \Lambda\!+\!\frac{1}{2}I_{dn_{l}}\Big]e_{z_{l}}\notag\\
&+\epsilon_{z_{l}}^{T}(I_{n_{l}}\!\otimes \!\Lambda)\epsilon_{z_{l}} \!+\!2\delta_{l}^{T}(I_{n_{l}}\!\otimes \!C^{T}F^{T}PPFC)\delta_{l} \Big\},
\end{align}
where $\Lambda=PBB^{T}P$.

And the derivative of $V_{2}$ along the trajectory of (\ref{equation:45}) is
\begin{align}\label{equation:48}
\dot V_{2} \leq &\  \alpha_{2}\left\{ \theta_{f}^{T}(D\otimes I_{d})\theta_{f}+\epsilon^{T}_{y_{f}}(\Omega_{ff}D\Omega_{ff}\otimes I_{d})\epsilon_{y_{f}} \right.\notag\\
&\left.+2\theta_{f}^{T}(D\Omega_{ff}\otimes I_{d})\epsilon_{y_{f}}-\frac{(\beta_{1}-1)}{2}\theta_{f}^{T}\theta_{f} \right.\notag\\
&\left.+(\beta_{1}-1)\epsilon^{T}_{y_{f}}(\Omega_{ff}\Omega_{ff}\otimes I_{d})\epsilon_{y_{f}}- \hat{\theta}_{f}^{T}\hat{\theta}_{f} \right\}.
\end{align}

Then, taking equations (\ref{equation:43.1})-(\ref{equation:43.3}) into the derivative of $V_{3}$ and subject to triggering functions (\ref{equation:42.1}) and (\ref{equation:42.2}), one has
\begin{align}\label{equation:49}
\dot V_{3} \leq &\  \alpha_{2}\Big\{ \hat{\theta}_{f}^{T}\hat{\theta}_{f}+3\eta_{f}^{T}\eta_{f}+3\epsilon_{z_{f}}^{T}(\Omega_{ff}^{2}\otimes I_{d})\epsilon_{z_{f}}\notag\\
&+3\epsilon_{z_{l}}^{T}(\Omega_{fl}^{T}\Omega_{fl}\otimes I_{d})\epsilon_{z_{l}} +2e_{z_{l}}^{T}e_{z_{l}}+2\epsilon_{z_{l}}^{T}\epsilon_{z_{l}}\notag\\
&-\beta_{2}\epsilon^{T}_{y_{f}}(D\otimes I_{d})\epsilon_{y_{f}}-\beta_{3}\|\epsilon_{z_{f}}\|^{2}-\beta_{4}\|\epsilon_{z_{l}}\|^{2}\notag\\
&+(\mu_{3}n_{l}+\mu_{4}n_{f})e^{-\bar{\varpi}_{\min} t}\Big\},
\end{align}
where $\bar{\varpi}_{\min} =\min{\{\varpi_{3},\varpi_{4}\}}$.

Since $A + FC$ is Hurwitz, there exists a positive definite matrix $Q^{-1}$ satisfying $Q^{-1}(A+FC)+(A+FC)^{T}Q^{-1}=-S$, where $S$ is the arbitrary positive definite matrix. 

Then, differentiating $V_{4}$ along the trajectory of (\ref{equation:45}) yields
\begin{align}\label{equation:50}
\dot V_{4} = &\  \alpha_{4} \delta^{T}\Big\{I_{n}\otimes \big[Q^{-1}(A+FC)+(A+FC)^{T}Q^{-1}\big]\Big\}\delta \notag \\
 = & -\alpha_{4} \delta^{T}(I_{n}\otimes S)\delta.
\end{align}
Let $S=C^{T}C+Q^{-1}Q^{-1}$ and $F=-QC^{T}$, then the equation $Q^{-1}(A+FC)+(A+FC)^{T}Q^{-1}=-S$ can be converted into $QA^{T}+AQ-QC^{T}CQ+I_{d}=0$, which means that solving the above equation will yield the matrix $Q$.

Furthermore, following equations (\ref{equation:46})-(\ref{equation:50}), one obtains the derivative of $V$ as
\begin{align*}
\dot V \leq &\  \alpha_{1}\eta_{f}^{T}\Big[I_{n_{f}}\otimes (PA+A^{T}P- \Lambda)+\Big(\frac{1}{2}+\frac{3\alpha_{2}}{\alpha_{1}}\Big)I_{dn_{f}}\Big]\eta_{f} \notag\\
&-\alpha_{2}\theta_{f}^{T}\Big\{D\otimes I_{d}+\Big[\frac{(\beta_{1}-1)}{2}-\frac{4\alpha_{1}}{\alpha_{2}}\Big]I_{dn_{f}}\Big\}\theta_{f}  \notag\\
&+\alpha_{3}e_{z_{l}}^{T}\Big[I_{n_{f}}\otimes (PA\!+\!A^{T}P\!-\!\Lambda)\!+\!\Big(\frac{1}{2}\!+\!\frac{2\alpha_{2}}{\alpha_{3}}\Big)I_{dn_{l}}\Big]e_{z_{l}}\notag\\
&-\alpha_{2}\epsilon^{T}_{y_{f}}\Big\{\big[\beta_{2}D-\Big(\frac{4\alpha_{1}}{\alpha_{2}}+\beta_{1}-1\Big)\Omega_{ff}^{2}\notag\\
&-\Omega^{T}_{ff}D\Omega_{ff}\big]\otimes I_{d}\Big\}\epsilon_{y_{f}}\!-\!\alpha_{2} \epsilon_{z_{f}}^{T}\Big(\beta_{3}I_{dn_{f}}\!-\!3\Omega_{ff}^{2}\otimes I_{d}\notag\\
&-\frac{4\alpha_{1}}{\alpha_{2}}\Omega_{ff}^{2}\otimes \Lambda\Big)\epsilon_{z_{f}}\!-\!\alpha_{2}\epsilon_{z_{l}}^{T}\Big[(\beta_{4}\!-\!2)I_{dn_{l}}\!-\!\frac{\alpha_{3}}{\alpha_{2}}I_{n_{l}}\otimes \Lambda\notag\\
&-3\Omega_{fl}^{T}\Omega_{fl}\otimes I_{d}-\frac{4\alpha_{1}}{\alpha_{2}}\Omega_{fl}^{T}\Omega_{fl}\otimes \Lambda\Big]\epsilon_{z_{l}}\notag\\
&-\alpha_{4} \delta_{f}^{T}\Big(I_{n}\otimes S-\frac{4\alpha_{1}}{\alpha_{4}}\Omega_{ff}^{2}\otimes C^{T}F^{T}PPFC\Big)\delta_{f}\notag\\
&-\alpha_{4} \delta_{l}^{T}\Big[I_{n}\otimes \Big(S-\frac{2\alpha_{3}}{\alpha_{4}}C^{T}F^{T}PPFC\Big)\notag\\
&-\frac{4\alpha_{1}}{\alpha_{4}}\Omega_{fl}^{T}\Omega_{fl}\otimes C^{T}F^{T}PPFC\Big]\delta_{l} \notag\\
&+(\mu_{3}n_{l}+\mu_{4}n_{f})e^{-\bar{\varpi}_{\min} t}.
\end{align*}

Under the condition satisfied by $P$, one has
\begin{align*}
\dot V \leq &-\alpha_{1}\Big(\lambda_{\min}(R_{1})-\frac{1}{2}-\frac{3\alpha_{2}}{\alpha_{1}}\Big)\eta_{f}^{T}\eta_{f}\notag\\
&-\alpha_{2}\Big[\frac{(\beta_{1}+1)}{2}-\frac{4\alpha_{1}}{\alpha_{2}}\Big]\theta_{f}^{T}\theta_{f} \notag\\
&-\alpha_{3}\Big(\lambda_{\min}(R_{1})-\frac{1}{2}-\frac{2\alpha_{2}}{\alpha_{3}}\Big)e_{z_{l}}^{T}e_{z_{l}}\notag\\
&-\alpha_{2}\Big[\beta_{2}-\|\Omega_{ff}^{2}\|-\Big(\frac{4\alpha_{1}}{\alpha_{2}}+\beta_{1}-1\Big)\|\Omega_{ff}^{2}\|\Big]\epsilon^{T}_{y_{f}}\epsilon_{y_{f}}\notag\\
&-\alpha_{2}\Big(\beta_{3}-3\|\Omega_{ff}^{2}\|-\frac{4\alpha_{1}}{\alpha_{2}}\|\Omega_{ff}^{2}\otimes \Lambda\|\Big) \epsilon_{z_{f}}^{T}\epsilon_{z_{f}}\notag\\
&-\alpha_{2}\Big(\beta_{4}-2-\frac{\alpha_{3}}{\alpha_{2}} \|\Lambda\|-3\|\Omega_{fl}^{T}\Omega_{fl}\|\notag\\
&-\frac{4\alpha_{1}}{\alpha_{2}}\|\Omega_{fl}^{T}\Omega_{fl}\otimes \Lambda\|\Big)\epsilon_{z_{l}}^{T}\epsilon_{z_{l}}\notag\\
&-\alpha_{4}\Big(\lambda_{\min}(S)-\frac{4\alpha_{1}}{\alpha_{4}}\|\Omega_{ff}^{2}\|\|PFC\|^{2}\Big)  \delta_{f}^{T}\delta_{f}\notag\\
&-\alpha_{4}\Big(\lambda_{\min}(S)-\frac{2\alpha_{3}}{\alpha_{4}}\|PFC\|^{2}\notag\\
&-\frac{4\alpha_{1}}{\alpha_{4}}\|\Omega_{fl}^{T}\Omega_{fl}\|\|PFC\|^{2}\Big) \delta_{l}^{T}\delta_{l} \notag\\
&+(\mu_{3}n_{l}+\mu_{4}n_{f})e^{-\varpi_{\min} t}.
\end{align*}

Furthermore, given the conditions satisfied by the parameters $\alpha_{i}$ ($i=1,2,3,4$) and $\beta_{i}$ ($i=1,2,3,4$), it can infer that
\begin{align}\label{equation:53}
\dot V \leq &-\bar{\alpha}_{1}\eta_{f}^{T}\eta_{f} -\bar{\alpha}_{2}\theta_{f}^{T}\theta_{f}  -\bar{\alpha}_{3}e_{z_{l}}^{T}e_{z_{l}}-\bar{\alpha}_{4}\delta_{f}^{T}\delta_{f}-\tilde{\alpha}_{4}\delta_{l}^{T}\delta_{l}\notag\\
&\ +(\mu_{3}n_{l}+\mu_{4}n_{f})e^{-\varpi_{\min} t} \notag\\
\leq&\  (\mu_{3}n_{l}+\mu_{4}n_{f})e^{-\varpi_{\min} t},
\end{align}
where $\bar{\alpha}_{1}=\alpha_{1}\lambda_{\min}(R_{1})-\frac{\alpha_{1}}{2}-3\alpha_{2}$,  $\bar{\alpha}_{2}=\frac{\alpha_{2}(\beta_{1}+1)}{2}-4\alpha_{1}$, $\bar{\alpha}_{3}=\alpha_{3}\lambda_{\min}(R_{1})-\frac{\alpha_{3}}{2}-2\alpha_{2}$, $\bar{\alpha}_{4}=\alpha_{4}\lambda_{\min}(S)-4\alpha_1\|\Omega_{ff}^{2}\|\|PFC\|^{2}$ and $\tilde{\alpha}_{4}=\alpha_{4}\lambda_{\min}(S)-2\alpha_{3}\|PFC\|^{2}-4\alpha_1\|\Omega_{fl}^{T}\Omega_{fl}\|\|PFC\|^{2}$ are positive constants.

Referring to (\ref{equation:20}) and (\ref{equation:21}) in the proof of Theorem \ref{th1}, similarly we can obtain from (\ref{equation:53}) that $\int_{0}^{t}{g(\tau)}d\tau$ is bounded, where $g=\bar{\alpha}_{1}\eta_{f}^{T}\eta_{f} +\bar{\alpha}_{2}\theta_{f}^{T}\theta_{f}  +\bar{\alpha}_{3}e_{z_{l}}^{T}e_{z_{l}} +\bar{\alpha}_{4}\delta_{f}^{T}\delta_{f} +\tilde{\alpha}_{4}\delta_{l}^{T}\delta_{l}$. Since the derivative of function $g(t)$ is bounded, then $g(t)$ is uniformly continuous. Combined with $\lim_{t\rightarrow\infty}\int_{0}^{t}{g(\tau)}d\tau$ exists, then $\lim_{t\rightarrow\infty}{g(t)}=0$ follows from the generalized Barbalat lemma. Since $\bar{\alpha}_{i}$ ($i=1,2,3,4$) and $\tilde{\alpha}_{4}$ are positive constants, it deduced that $\lim_{t\rightarrow\infty}{\eta_{f}}=0$, $\lim_{t\rightarrow\infty}{\theta_{f}}=0$ and $\lim_{t\rightarrow\infty}{e_{z_{l}}}=0$ and $\lim_{t\rightarrow\infty}{\delta}=0$. It implies that MASs can converge to the affine formation and the adaptive parameters $d_{i}$, $\phi_{i}$, $\varphi_{i}$ ($i\in V_{f}$), $\gamma_{i}$ ($i\in V_{l}$) converge to some finite constants by equations (\ref{equation:40}), (\ref{equation:43.1})-(\ref{equation:43.3}). This concludes the proof.
\end{proof}

The following theorem clarifies the practicality of the adaptive event-triggered strategy above.

\begin{theorem}\label{th4}
Under Theorem \ref{th3} holds, the MASs are without Zeno behavior occurring.
\end{theorem}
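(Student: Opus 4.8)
The plan is to replicate the structure of the proof of Theorem~\ref{th2}, now working with the observed-state measurement error $\epsilon_{z_i}$ in place of $\epsilon_{p_i}$, and to establish a strictly positive lower bound on every inter-event interval $t_{k+1}^i - t_k^i$ for both leaders and followers. The essential new ingredient, relative to the state-based case, is the observer innovation appearing in the dynamics of $\epsilon_{z_i}$: since $\hat{z}_i$ is held constant on $[t_k^i, t_{k+1}^i)$ and $Cz_i - q_i = C(z_i - p_i) = C\delta_i$, differentiating $\epsilon_{z_i} = \hat{z}_i - z_i$ along (\ref{equation:44}) gives, for a leader, $\dot{\epsilon}_{z_i} = A\epsilon_{z_i} - (A+BK)\hat{z}_i - Bv_i - FC\delta_i$. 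This is exactly the state-based expression in (\ref{equation:22}) augmented by the bounded term $-FC\delta_i$.

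First I would treat the leaders. Computing $\frac{d}{dt}(\gamma_i\|\epsilon_{z_i}\|^2) = \dot{\gamma}_i\|\epsilon_{z_i}\|^2 + 2\gamma_i\epsilon_{z_i}^T\dot{\epsilon}_{z_i}$ with $\dot{\gamma}_i = \|\epsilon_{z_i}\|^2$ from (\ref{equation:43.1}), and bounding the cross terms by Young's inequality so that $(A+BK)\hat{z}_i$, $Bv_i$, and $FC\delta_i$ are absorbed into an additive constant, I expect a comparison inequality $\frac{d}{dt}(\gamma_i\|\epsilon_{z_i}\|^2) \leq \bar{A}_{3i}\gamma_i\|\epsilon_{z_i}\|^2 + \Delta_{3i}$ of the same form as (\ref{equation:22}), where $\bar{A}_{3i}$ and $\Delta_{3i}$ are suprema over $[t_k^i, t_{k+1}^i)$ that now also collect the contribution $\gamma_i\|FC\delta_i\|^2$. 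At the next trigger the function (\ref{equation:42.1}) vanishes, so $\gamma_i\|\epsilon_{z_i}\|^2 = \|\hat{z}_i - p_i^*\|^2 + \mu_3 e^{-\varpi_3 t_{k+1}^i}$; integrating the comparison ODE from $t_k^i$, where $\epsilon_{z_i} = 0$, and discarding the nonnegative term $\|\hat{z}_i - p_i^*\|^2$ yields $t_{k+1}^i - t_k^i \geq \frac{1}{\bar{A}_{3i}}\ln(\frac{\mu_3\bar{A}_{3i}}{\Delta_{3i}}e^{-\varpi_3 t_{k+1}^i}+1) > 0$, mirroring (\ref{equation:25}).

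The follower case proceeds identically on the combined quantity $\varphi_i\|\epsilon_{z_i}\|^2 + \phi_i d_i\|\epsilon_{y_i}\|^2$, using the update laws (\ref{equation:43.2})--(\ref{equation:43.3}) together with the $y_i$ and $d_i$ dynamics in (\ref{equation:40}), which are unchanged from the state-based protocol; the observer again enters only through the bounded $FC\delta_i$. This should produce $\frac{d}{dt}(\varphi_i\|\epsilon_{z_i}\|^2 + \phi_i d_i\|\epsilon_{y_i}\|^2) \leq \bar{A}_{4i}(\varphi_i\|\epsilon_{z_i}\|^2 + \phi_i d_i\|\epsilon_{y_i}\|^2) + \Delta_{4i}$ and, since (\ref{equation:42.2}) vanishes at the next trigger, the bound $t_{k+1}^i - t_k^i \geq \frac{1}{\bar{A}_{4i}}\ln(\frac{\mu_4\bar{A}_{4i}}{\Delta_{4i}}e^{-\varpi_4 t_{k+1}^i}+1) > 0$, as in (\ref{equation:29}). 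Because both families of lower bounds are strictly positive, Zeno behavior is excluded.

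The hard part will be justifying that $\bar{A}_{3i}$, $\bar{A}_{4i}$, $\Delta_{3i}$, $\Delta_{4i}$ are genuinely finite, i.e., that the suprema defining them are attained. This rests on the boundedness guarantees inherited from Theorem~\ref{th3}: the Lyapunov argument there shows $V$ is bounded, hence $\eta_f$, $\theta_f$, $d_i$, $\phi_i$, $\varphi_i$, $\gamma_i$ and $e_{z_l}$ are bounded, while $A+FC$ being Hurwitz forces $\|\delta_i(t)\|$ to be uniformly bounded (indeed exponentially decaying, since $\lim_{t\to\infty}\delta = 0$). Consequently $\hat{z}_i$, $\hat{y}_i$, $\hat{\theta}_i$, $Bv_i$, and $FC\delta_i$ are all bounded on each inter-event interval, so the maxima exist and the logarithmic lower bounds are well-defined and strictly positive, completing the argument.
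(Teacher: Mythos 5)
Your proposal is correct and follows essentially the same route as the paper: the paper's proof of Theorem \ref{th4} likewise just reruns the comparison-inequality argument of Theorem \ref{th2} on $\gamma_i\|\epsilon_{z_i}\|^2$ and $\varphi_i\|\epsilon_{z_i}\|^2+\phi_i d_i\|\epsilon_{y_i}\|^2$, absorbing the extra observer term $FC\delta_i$ into the constants (which is exactly why the paper's $\Delta_{3i}$, $\Delta_{4i}$ contain $\gamma_i\|FC\delta_i\|^2$, $\varphi_i\|FC\delta_i\|^2$ and the exponents pick up $2\|A\|+3$ in place of $2\|A\|+2$), and then reads off the strictly positive logarithmic lower bounds from the vanishing of the triggering functions. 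Your added discussion of why the suprema defining these constants are finite, via the boundedness from Theorem \ref{th3}, is a detail the paper leaves implicit but is consistent with its argument.
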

\begin{proof}
Following the proof of Theorem \ref{th2} and the triggering function (\ref{equation:42.1}), for each leader to finally yield
\begin{align}\label{equation:54}
t_{k+1}^{i}-t_{k}^{i}
\geq&\ \frac{1}{\bar{A}_{3i}}\ln(\frac{\mu_{3}\bar{A}_{3i}}{\Delta_{1i}}e^{-\varpi_{3}t_{k+1}^{i}}+1)>0,
\end{align}
where $\bar{A}_{3i}=\max_{t\in[t_{k}^{i},t_{k+1}^{i})}\big\{\frac{\dot{\gamma}_{i}}{\gamma_{i}}\!+2\|A\|\!+3\big\}$ and $\Delta_{3i}=\max_{t\in[t_{k}^{i},t_{k+1}^{i})}\{\gamma_{i}\|(A+\!BK)\hat{z}_{i}\|^{2}+\gamma_{i}\|FC\delta_{i}\|^{2}+\gamma_{i}\|Bv_{i}\|^{2}\}$.

For each follower, it follows equally well (\ref{equation:42.2}) that
\begin{align}\label{equation:55}
t_{k+1}^{i}-t_{k}^{i}
\geq&\ \frac{1}{\bar{A}_{4i}}\ln(\frac{\mu_{4}\bar{A}_{4i}}{\Delta_{4i}}e^{-\varpi_{4}t_{k+1}^{i}}+1)>0,
\end{align}
where $\bar{A}_{4i}=\max_{t\in[t_{k}^{i},t_{k+1}^{i})}\big\{\frac{\dot{\varphi}_{i}}{\varphi_{i}}+2\|A\|+3,\dot{\phi}_{i}+\dot{d}_{i}+1\big\}$ and $\Delta_{4i}=\max_{t\in[t_{k}^{i},t_{k+1}^{i})}\{\phi_{i}d^{3}_{i}\|\hat{\theta}_i\|^{2}+\varphi_{i}\|(A+BK)\hat{z}_{i}\|^{2}+\varphi_{i}\|FC\delta_{i}\|^{2}+\varphi_{i}\|B\hat{y}_{i}\|^{2}\}$.
Thus, no Zeno behavior exists for each agent. The proof is done.
\end{proof}
%% topology
\begin{figure}[!] %\label{fig3}
		\centering \includegraphics[width=0.45\textwidth,height=5cm]{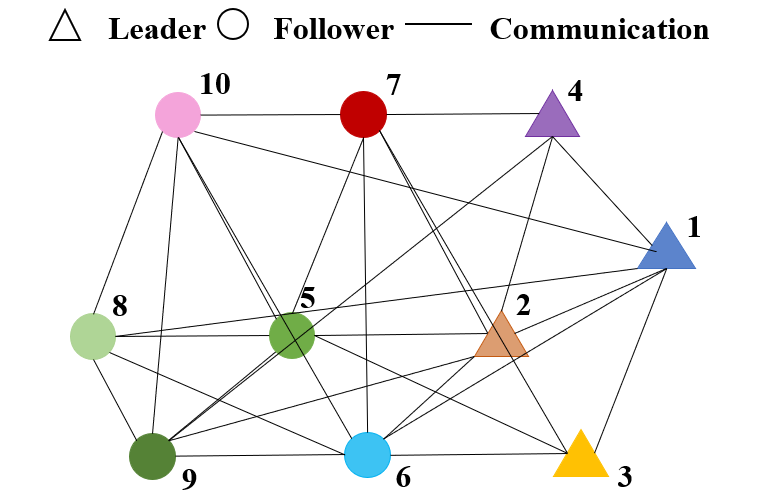}
\caption{The communication graph of agents.}
\label{FIG:1}
\end{figure}

\section{Numerical Simulations \label{Simulations}}

This section verifies the previous theoretical results via numerical simulation experiments.

\begin{table}[!]
\centering
\caption{\\Transformation parameters}
\setlength{\tabcolsep}{1mm}{
\renewcommand\arraystretch{1.2}
\setlength{\arraycolsep}{0.3mm}
\begin{threeparttable}
  \begin{tabular}{cp{2cm}cc}
\toprule
Figure&Transformation& $\Gamma(t)$&$b(t)$\\
\midrule
\ref{fig:2.51}, \ref{fig:3.51}&Rotation\centering&$\begin{bmatrix}
0&0&-1\\
     0&1&0\\
     1&0&0\end{bmatrix}$,
     $\begin{bmatrix}
c&0&-s&0\\
0&0&0&0\\
     s&0&c&0\\
    0&0&0&0\end{bmatrix}$&$\begin{bmatrix}
0\\ 0\\ 0\end{bmatrix}$, $\begin{bmatrix}
0\\ 0\\0\\ 0\end{bmatrix}$ \\%%%0.0001%%% ?2ì??????÷
\ref{fig:2.52}, \ref{fig:3.52}&Scale up\centering&$\begin{bmatrix}
2&0&0\\
     0&2&0\\
     0&0&2\end{bmatrix}$, $\begin{bmatrix}
2&0&0&0\\
0&0&0&0\\
     0&0&2&0\\
    0&0&0&0\end{bmatrix}$&$\begin{bmatrix}
0\\ 0\\ 0\end{bmatrix}$, $\begin{bmatrix}
0\\ 0\\0\\ 0\end{bmatrix}$\\
\ref{fig:2.53}, \ref{fig:3.53}&Shear\centering&$\begin{bmatrix}
1&1&0\\
     0&1&0\\
     0&0&1\end{bmatrix}$, $\begin{bmatrix}
1&0&0&0\\
0&0&0&0\\
     1&0&1&0\\
    0&0&0&0\end{bmatrix}$&$\begin{bmatrix}
0\\ 0\\ 0\end{bmatrix}$, $\begin{bmatrix}
0\\ 0\\0\\ 0\end{bmatrix}$\\
\ref{fig:2.54}, \ref{fig:3.54}&\makecell*[c]{Coplanar\\/Colinear}&$\begin{bmatrix}
1&0&0\\
     0&1&0\\
     0&0&0\end{bmatrix}$, $\begin{bmatrix}
3&0&1&0\\
0&0&0&0\\
     3&0&1&0\\
    0&0&0&0\end{bmatrix}$&$\begin{bmatrix}
0\\ 0\\ 0\end{bmatrix}$, $\begin{bmatrix}
0\\ 0\\0\\ 0\end{bmatrix}$\\
\ref{fig:2.55},\ref{fig:3.55}&\makecell*[c]{Rotation, \\scale down,\\ shear, and\\ translation}&$\begin{bmatrix}
0&\frac{1}{2}&-\frac{1}{2}\\
     0&\frac{1}{2}&0\\
     \frac{1}{2}&0&0\end{bmatrix}$, $\begin{bmatrix}
\frac{c}{2}&0&-\frac{s}{2}&0\\
0&0&0&0\\
     \frac{c+s}{2}&0&\frac{c-s}{2}&0\\
    0&0&0&0\end{bmatrix}$&$\begin{bmatrix}
2\\ -2\\ 2\end{bmatrix}$, $\begin{bmatrix}
-2\\ 0\\2\\ 0\end{bmatrix}$\\
\bottomrule
\end{tabular}
 \begin{tablenotes}
     \footnotesize
        \item Note: $c=\cos\frac{\pi}{4}$, $s=\sin\frac{\pi}{4}$.
      \end{tablenotes}
\end{threeparttable}}
\label{tab:1}
\end{table}

\textbf{Case 1}: For the state-based ETC case, we consider the nominal MAS configuration involving four leaders and six followers in the three-dimensional space as $P(a)=[2, 4, 0; -2, -4, 0; -2, 2, 4; 2, -2, 4; -2, 4, 0; -4, 0, 0; 4, 0, 0; \\2, -4, 0; 2, 2, 4; -2, -2, 4]$, whose communication between agents is shown in the Fig. \ref{FIG:1}(a). The linear dynamic of the MAS is depicted by (\ref{equation:1}) with
\begin{equation*}
A=\begin{bmatrix}
		1 & 1 & 0\\
		-1 & 0 & 0\\
        1 & 0 & 1
	\end{bmatrix}, \
B=\begin{bmatrix}
		0 & 1 & 0\\
		-1 & 0 & 0\\
        -1 & 1 & 1
	\end{bmatrix}, 
\end{equation*}
where the eigenvalues of unstable $A$ are $1$ and $0.5\pm0.866\iota$. By setting $R_{1}=I_{d}$, then solving the algebraic Riccati equation in Theorem \ref{th1} yields
\begin{align*}
\setlength{\arraycolsep}{1.2pt}
P=\begin{bmatrix}
		2.20&	0.57&	-0.35\\
0.56&	1.98&	-0.94\\
-0.35&-0.94&1.52
	\end{bmatrix}, {\rm and} \ 
 K=\begin{bmatrix}
0.21&	1.04&	0.58\\
-1.85&	0.39&	-1.18\\
0.35&	0.94&	-1.52
	\end{bmatrix}.
\end{align*}
Then, the compensation term $v_{i}=-B^{-1}(A+BK)p_i^{*}$.

%% case 1
\begin{figure*}[!]
%\hrulefill
\begin{center}
%\hspace{-4mm}
{
\subfloat[Nominal formation.\label{FIG:2.1}]{\includegraphics[width=5.8cm,height=3cm]{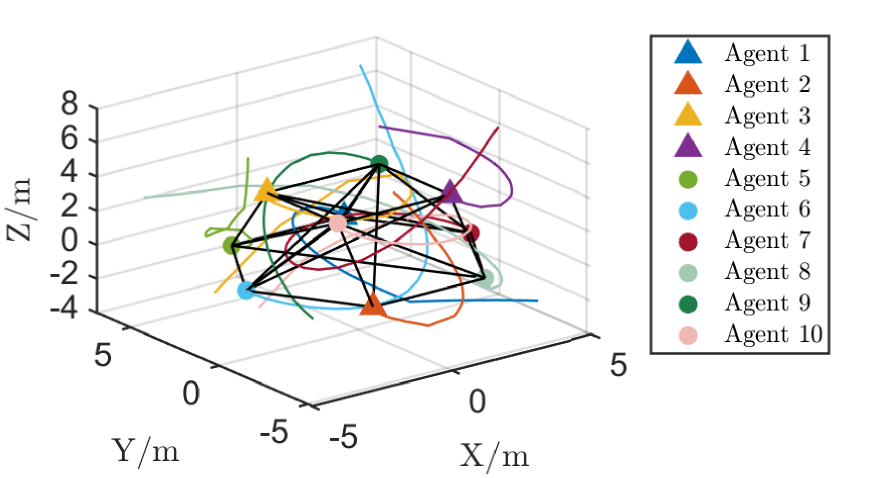}
}
}
{
\subfloat[Rotate 270 degrees on the Y-axis.\label{fig:2.51}]{\includegraphics[width=5.8cm,height=3cm]{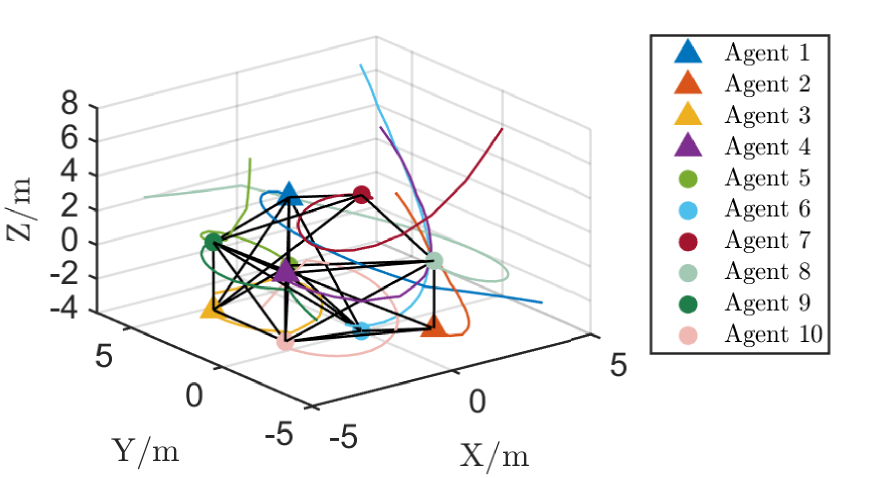}
}
}
%\hspace{-9mm}
{
\subfloat[Enlargement by a twofold factor.\label{fig:2.52}]{\includegraphics[width=5.8cm,height=3cm]{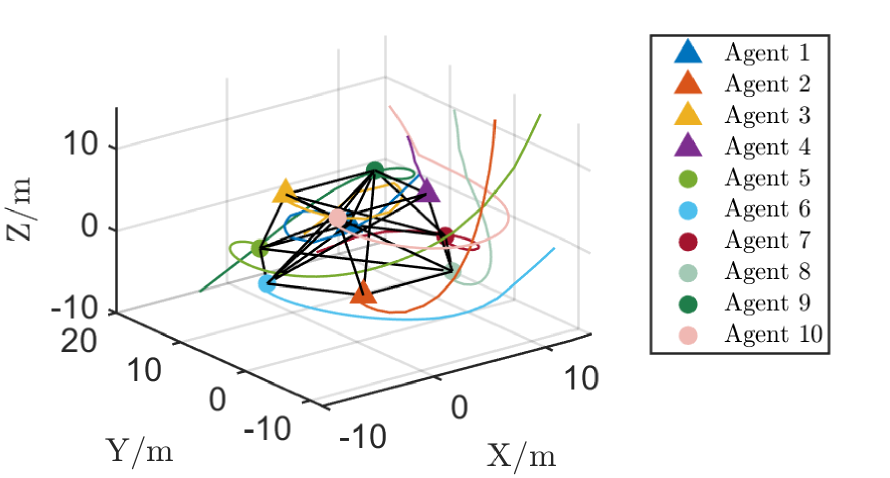}}
}
%\hspace{-9mm}
{
\subfloat[Shear along the X-axis.\label{fig:2.53}]{\includegraphics[width=5.8cm,height=3cm]{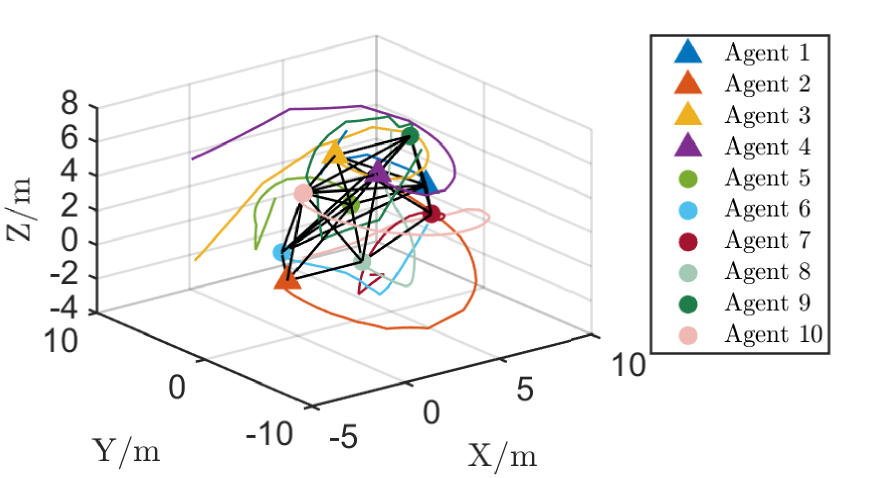}}
}
%\hspace{-9mm}
{
\subfloat[Coplanar. \label{fig:2.54}]{\includegraphics[width=5.8cm,height=3cm]{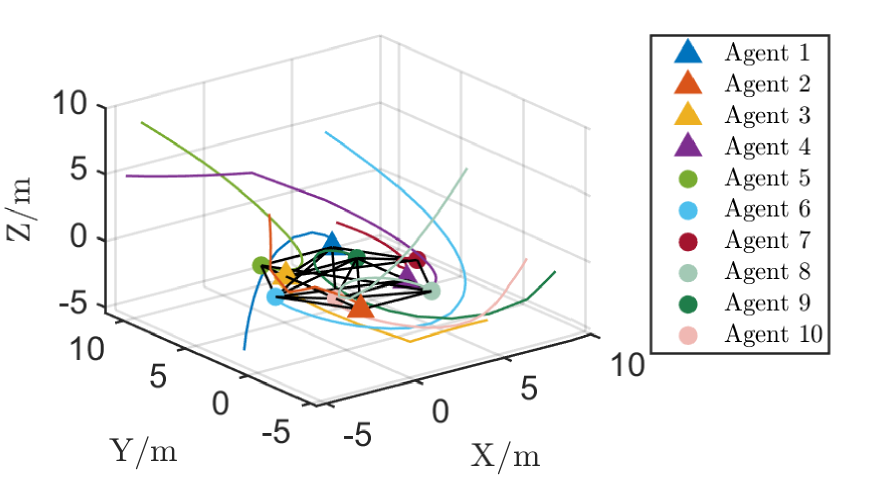}}
}
{
\subfloat[A combination transformation of rotation, scaling down, shear, and translation. \label{fig:2.55}]{\includegraphics[width=5.8cm,height=3cm]{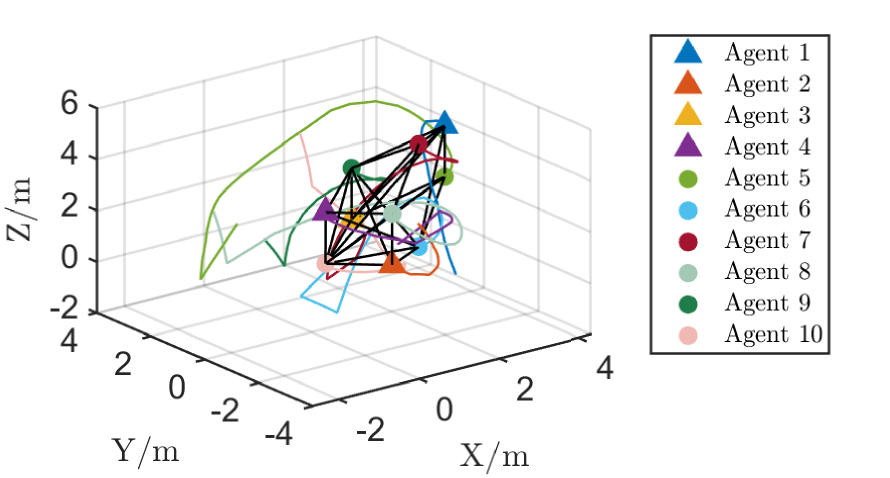}}
}
\caption{The evolutions of agents with nominal formation and some affine transformations under the control scheme (\ref{equation:8}).}
\label{FIG:2.5}
\end{center}
\hrulefill
\end{figure*}

%% case 1
\begin{figure}[!] %\label{fig3}
		\centering \includegraphics[width=9cm,height=3cm]{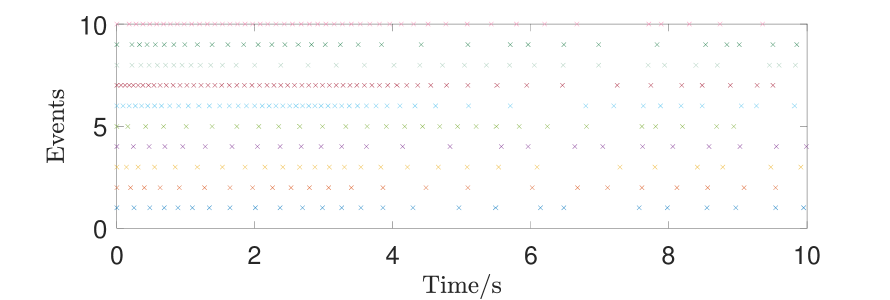}
\caption{The triggering instants under the control scheme (\ref{equation:8}).}
\label{FIG:2.4}
\end{figure}
\begin{figure}[!] %\label{fig3}
		\centering \includegraphics[width=9cm,height=4.5cm]{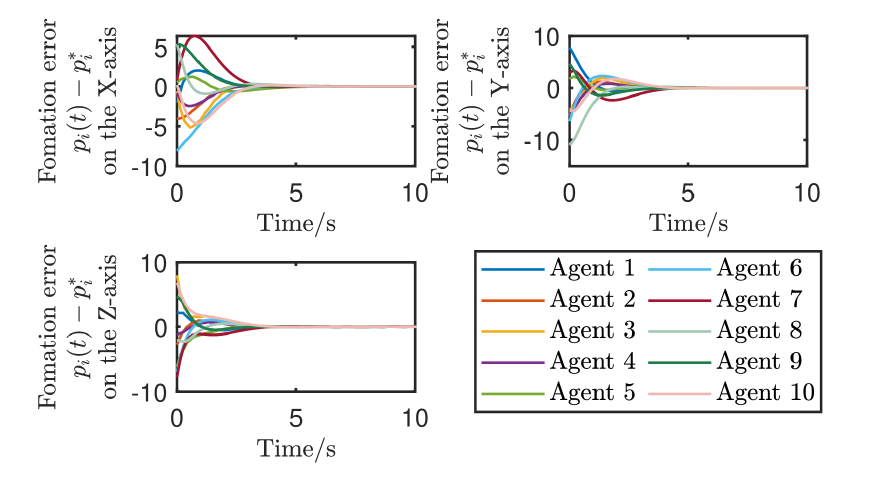}
\caption{The formation errors under the control scheme (\ref{equation:8}).}
\label{FIG:2.2}
\end{figure}

\begin{figure}[!] %\label{fig3}
		\centering \includegraphics[width=9.1cm,height=5.5cm]{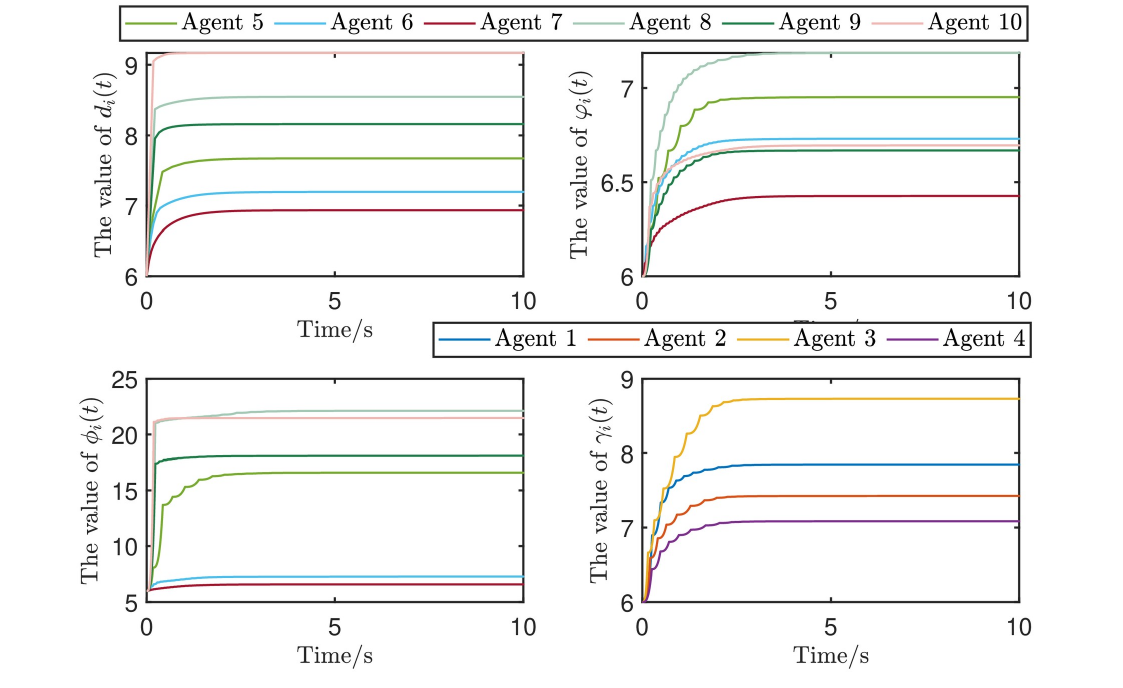}
\caption{The adaptive parameters under the control scheme (\ref{equation:8}).}
\label{FIG:2.3}
\end{figure}

%\vspace{0.1cm}

In this example, the given formation and the solved stress matrix $\Omega$ (\ref{equation:70}) by using dynamic programming methods \cite{Zhao2018Affine} can satisfy Assumptions \ref{a1}-\ref{a2}. Set the event-triggered parameters as $\mu_{i}=\varpi_{i}=1$ for $i=1,2$, and the initial values of parameters as $d_{i}(0)=\phi_{i}(0)=\varphi_{i}(0)=6$ for $i\in V_{f}$ and $\gamma_{i}(0)=6$ for $i\in V_{l}$. The initial states are chosen randomly.
\begin{align}\label{equation:70}
\footnotesize{\setlength{\arraycolsep}{0.3pt}
\Omega\!=\!\!\begin{bmatrix}
0.41&	0	&-0.05	&0&	-0.27	&0.05&	-0.10	&0	&-0.09&	0.15\\
0&	0.56&	0&	0&	0	&-0.23&	0&	-0.34	&0.23	&-0.23\\
-0.05&	0&	0.44&	0	&-0.15&	0	&0.20&	0&	-0.24&	-0.20\\
0&	0&	0	&0.36	&0	&0.14& 0&	-0.14&	-0.14	&-0.21\\
-0.27&	0&	-0.15&	0&	0.51	&-0.34&	0	&0.10&	0&	0.15\\
0.05&	-0.23	&0	&0.14	&-0.34&	0.52&	0	&0	&0.06	&-0.21\\
-0.20&	0&	0.20&	0	&0&	0	&0.39&	-0.20	&-0.20&	0\\
0	 & -0.34	&0	&-0.14&	0.10	&0	&-0.20&	0.44	&0	&0.14\\
-0.09&	0.23	&-0.24&	-0.14&	0	&0.06	&-0.20	&0	&0.56&	-0.18\\
0.15	&-0.23	&-0.20&	-0.21&	0.15&	-0.21&	0	&0.14	&-0.18&	0.59
\end{bmatrix}}
\end{align}

To verify the control scheme (\ref{equation:8}), we performed the following experimental simulation on the formation of nominal and their affine-transformed configurations, respectively.

As shown in Fig. \ref{FIG:2.1}, leaders move to the target positions, and the other followers can follow to form the target nominal formation $P(a)$. Then, Fig. \ref{FIG:2.4} shows that the triggering instants become sparser with time. Calculable agents 1-10 have triggering frequencies of $2.6$\%, $2.4$\%, $2.7$\%, $2.5$\%, $2.8$\%, $5$\%, $5.8$\%, $3.4$\%, $3.3$\%, and $4.9$\% for step size $0.01$s, signifying a low communication rate as suitable for real-world scenarios with limited communication. Apart from that, Fig. \ref{FIG:2.2} reveals that the formation errors converge to 0 within $t=10$s, and Fig. \ref{FIG:2.3} displays that all adaptive parameters are stabilized to finite values, confirming the findings given in Theorem \ref{th1}.

\begin{table}[!]
\centering
\caption{\\Events count per agent under the control scheme (\ref{equation:8})\\(Time $10$s, step size $0.01$s)}
\setlength{\tabcolsep}{2.2mm}{
\renewcommand\arraystretch{0.8}
\tabcolsep=0.18cm
\begin{tabular}{p{2cm}cccccccccc}
\toprule
Agent\centering &1&2&3&4&5&6&7&8&9&10\\
\midrule
Nominal\centering&26&24&27&25&28&50&58&34&33&49\\
Rotation\centering&23&24&27&26&38&41&33&51&32&33\\
Scale up\centering&30&32&33&24&74&75&39&44&59&84\\
Shear\centering&23&32&30&30&37&47&41&44&28&64\\
Coplanar\centering&25&25&26&30&34&50&30&43&42&49\\
\makecell*[c]{Rotation, scale\\ down, shear, and \\ translation}&23&21&16&22&45&39&44&41&35&36\\%%%0.0001%%% ?2ì??????÷
\bottomrule
\end{tabular}
}
\label{tab:2}
\end{table}

Setting $\Gamma(t)$ and $b(t)$ then performing a series of affine transformations on the nominal configuration, with rotation, scaling, shear, coplanar, and combinatorial transformations, having the movement trajectories shown in Fig. \ref{FIG:2.5}. One can see that the affine formations form steadily with the set transformation parameters (see TABLE \ref{tab:1}). Even more importantly, it is noted in TABLE \ref{tab:2} that almost all agents are statistically less than 9\% of the communication.
%\vspace{-0.5em}

%% case 2
\begin{figure*}[!]
\begin{center}
%\hspace{-4mm}
{
\subfloat[Nominal formation.\label{FIG:3.1}]{\includegraphics[width=5.8cm,height=3cm]{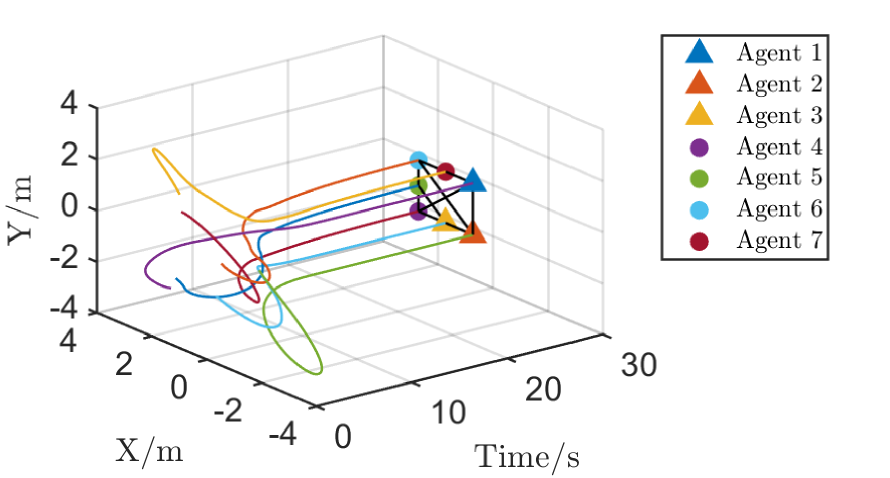}
}
}
{
\subfloat[Rotate 45 degrees clockwise.\label{fig:3.51}]{\includegraphics[width=5.8cm,height=3cm]{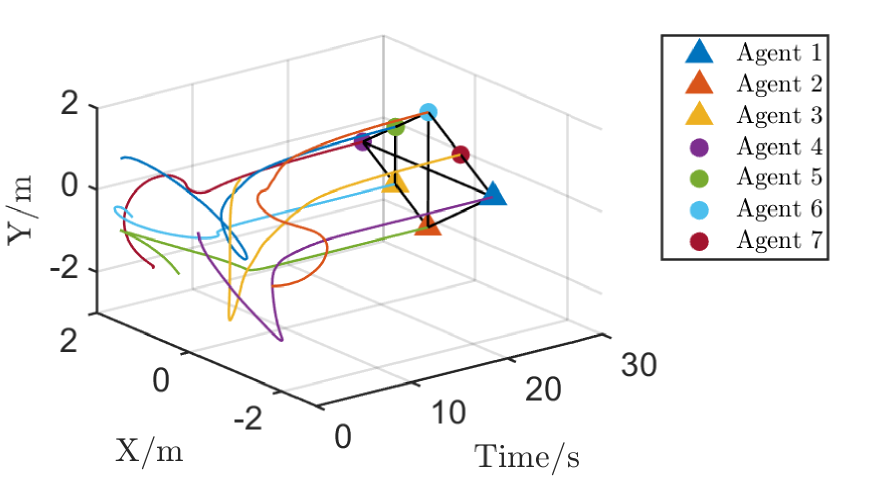}
}
}
%\hspace{-9mm}
{
\subfloat[Enlargement by a twofold factor.\label{fig:3.52}]{\includegraphics[width=5.8cm,height=3cm]{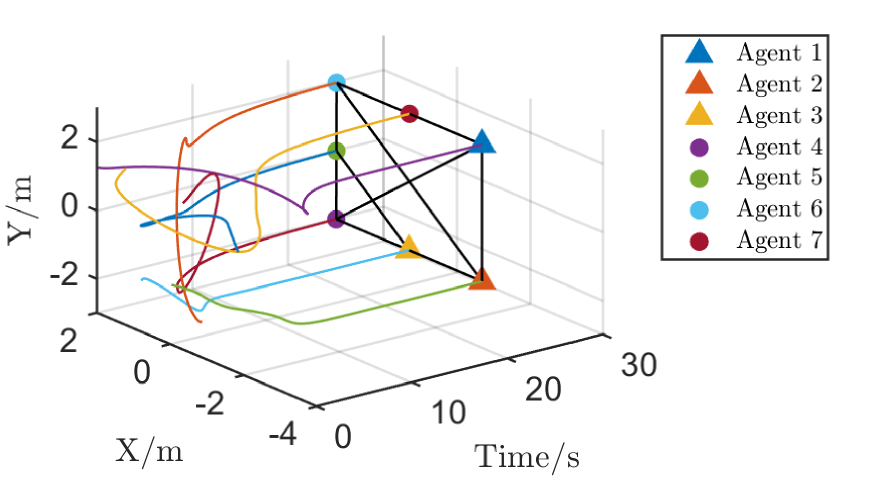}}
}
%\hspace{-9mm}
{
\subfloat[Shear along the Y-axis.\label{fig:3.53}]{\includegraphics[width=5.8cm,height=3cm]{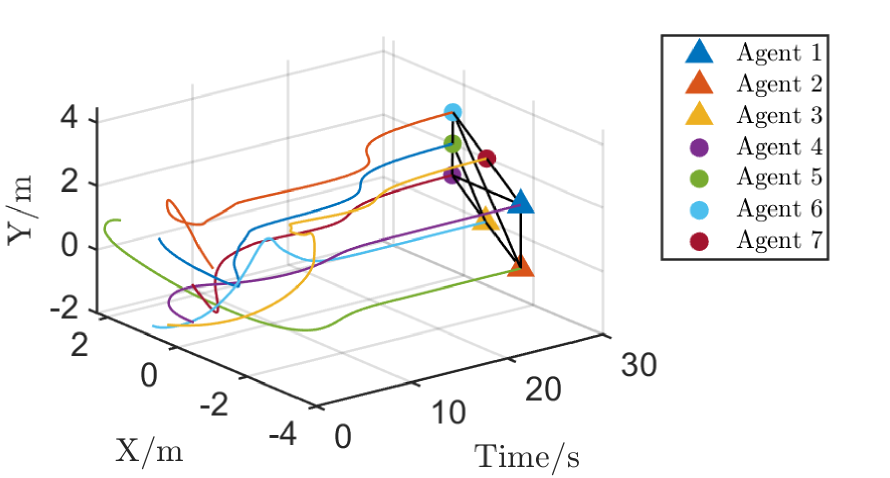}}
}
%\hspace{-9mm}
{
\subfloat[Colinear. \label{fig:3.54}]{\includegraphics[width=5.8cm,height=3cm]{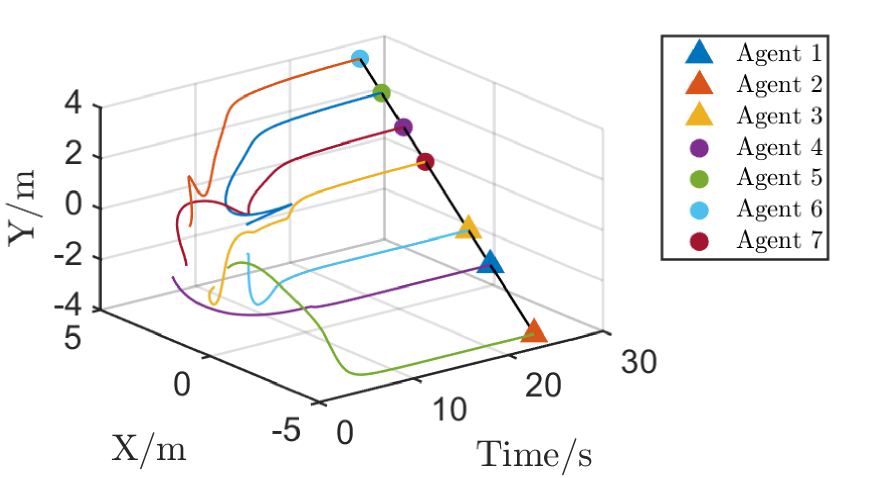}}
}
%\hspace{11mm}
{
\subfloat[A combination transformation of rotation, scaling down, shear, and translation.
 \label{fig:3.55}]{\includegraphics[width=5.8cm,height=3cm]{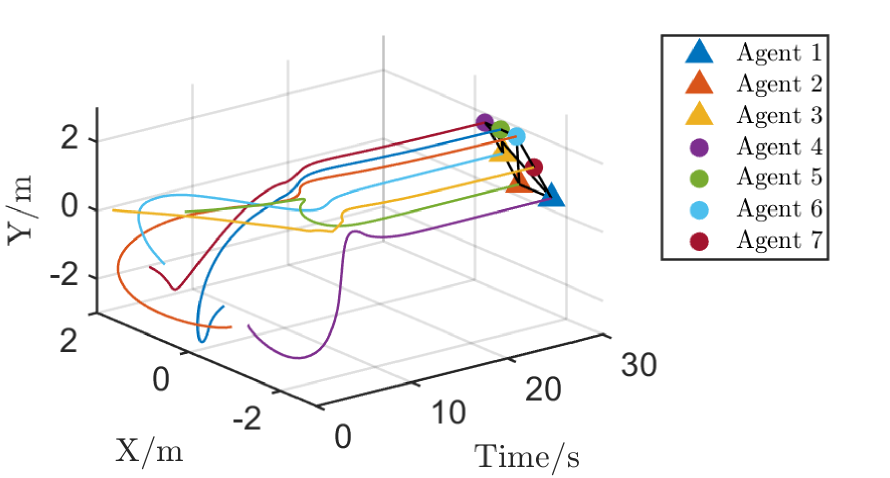}}
}
\caption{The evolutions of agents with nominal formation and some affine transformations under the control scheme (\ref{equation:44}).}
\label{FIG:3.6}
\end{center}
\hrulefill
\end{figure*}

\begin{figure}[!] %\label{fig3}
		\centering \includegraphics[width=9cm,height=3cm]{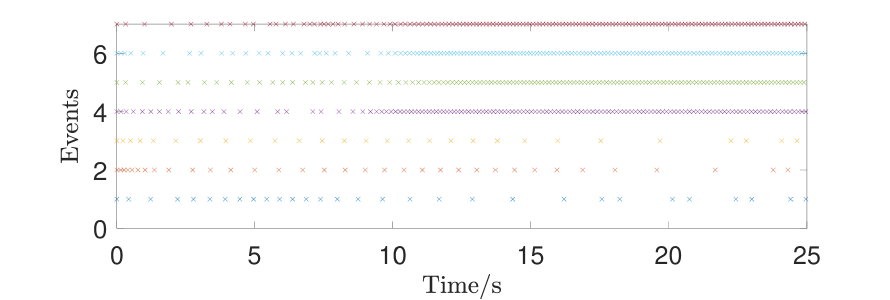}
\caption{The triggering instants under the control scheme (\ref{equation:44}).}
\label{FIG:3.5}
\end{figure}

\begin{figure}[!] %\label{fig3}
		\centering \includegraphics[width=9cm,height=6cm]{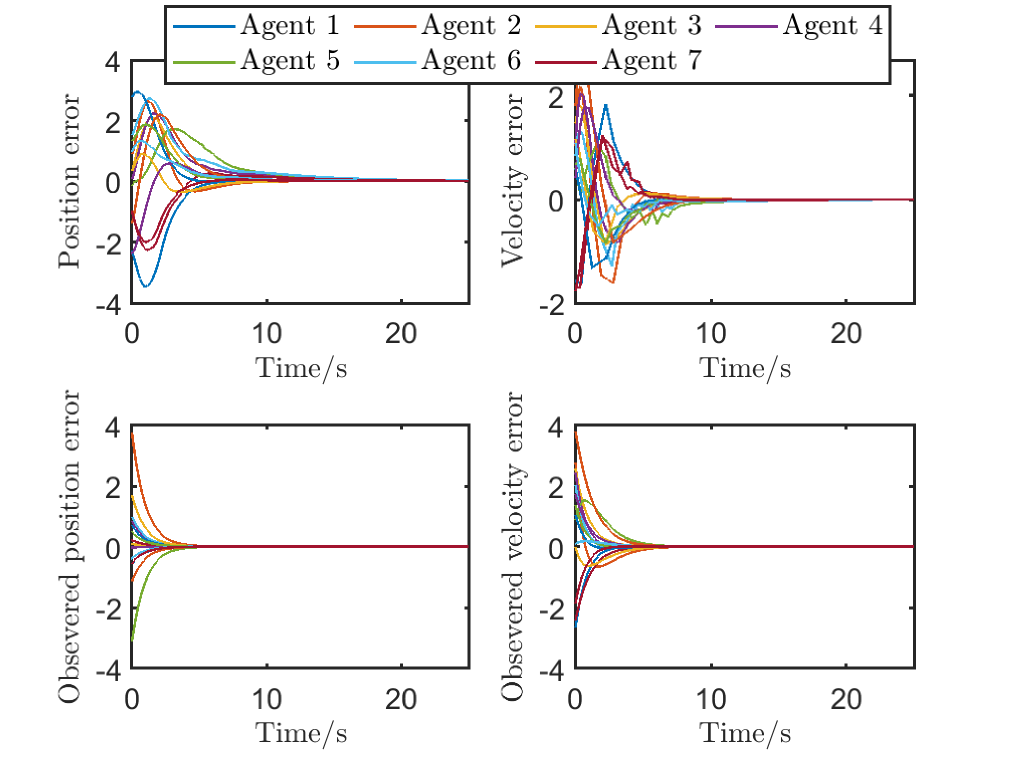}
\caption{The formation errors and observer errors under the \\control scheme (\ref{equation:44}).}
\label{FIG:3.3}
\end{figure}

\begin{figure}[!] %\label{fig3}
		\centering \includegraphics[width=9.1cm,height=6cm]{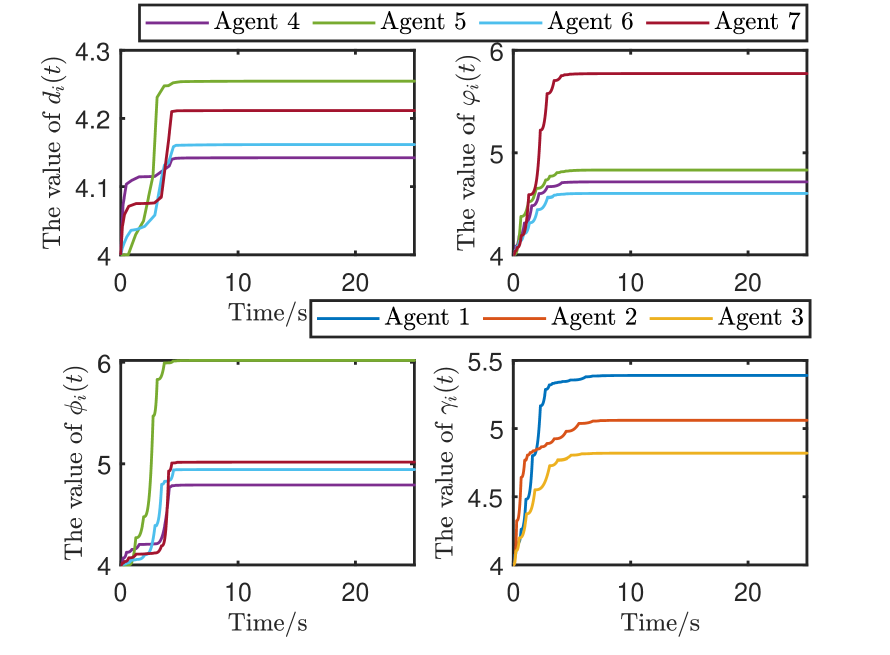}
\caption{The adaptive parameters under the control scheme (\ref{equation:44}).}
\label{FIG:3.4}
\end{figure}

\textbf{Case 2}: In this case, to validate the output-based control scheme (\ref{equation:44}), we employ an MAS consisting of three leader and four follower unicycles:
\begin{equation}\label{equation:56}
\begin{cases}
\dot{x}_{i}(t)=\bar{v}_{i}(t)\cos(\theta_{i}(t)),   \\
\dot{y}_{i}(t)=\bar{v}_{i}(t)\sin(\theta_{i}(t)), \\
\dot{\theta}_{i}(t)=\bar{\upsilon}_{i} (t), \quad i\in\{1,2,\dots,7\},
\end{cases}
\end{equation}
where $x_{i}(t)\in\mathbb{R}$, $y_{i}(t)\in\mathbb{R}$ are the Cartesian coordinates of the center of MASs, $\theta_{i}(t)\in\mathbb{R}$ is the heading angle in the inertial frame, $\bar{v}_{i}(t)\in\mathbb{R}$ and $\bar{\upsilon}_{i}(t)\in\mathbb{R}$ are the linear and angular velocities of $i$-th vehicle.

Following the dynamic feedback linearization method of Reference \cite{10164160} and assuming the output model, the model (\ref{equation:56}) can be transformed into the system (\ref{equation:1}) with parameters
\begin{equation*}
A=\begin{bmatrix}
		0 & 1 & 0 & 0\\
		0 & 0 & 0  & 0\\
            0 & 0 & 0 & 1\\
            0 & 0 & 0  & 0
	\end{bmatrix}, \
B=\begin{bmatrix}
		0 & 0 \\
		1 & 0 \\
            0 & 0 \\
            0 & 1
            \end{bmatrix}, 
C^{T}=\begin{bmatrix}
		1 & 0 \\
		1 & 0 \\
            0 & 1 \\
            0 & 1
            \end{bmatrix}
\end{equation*} 
and the state $p_{i}(t)=[x_{i}(t),\bar{v}_{x_{i}}(t),y_{i}(t),\bar{v}_{y_{i}}(t)]^{T}$, the input $u_{i}(t)=[u_{x_{i}}(t),u_{y_{i}}(t)]^{T}$, where $\bar{v}_{x_{i}}(t)=\bar{v}_{i}(t)\cos(\theta_{i}(t))$, $\bar{v}_{y_{i}}(t)=\bar{v}_{i}(t)\sin(\theta_{i}(t))$.
Following Theorem \ref{th3} and setting $R_{1}=R_{2}=I_{d}$, the gain matrices $F$ and $K$ can be obtained.

In this case, define the nominal configuration as $P(a)=[-1,0,1,0; -1,0,-1,0; 0,0,-1,0; 1,0,-1,0; 1,0,0,0; 1,0,\\1,0; 0,0,1,0]$. Fig. \ref{FIG:1}(b) on page 7 is the topology graph and the obtained positive eigenvalues are $\lambda(\Omega_{ff})=\{0.77, 0.88, 1.15, 1.43\}$. By Remark \ref{r3}, the matrices $U_{1}=[0, 1, 0, 0; 0, 0, 0, 1]$ and $U_{2}=[1, 0, 0, 0; 0, 0, 1, 0]$ make $U_{2}Ap^{*}_{i}=0$ hold and $v_{i}=-U_{1}(A+BK)p^{*}_{i}$.
Set $\varpi_{i}=0.6$, $\mu_{i}=1$ for $i=3,4$,  $d_{i}(0)=\phi_{i}(0)=\varphi_{i}(0)=4$ for $i\in V_{f}$ and $\gamma_{i}(0)=4$ for $i\in V_{l}$.

\begin{table}[!]
\centering
\caption{\\Event count per agent under the control scheme (\ref{equation:44})\\(Time $25$s, step size $0.01$s)}
\setlength{\tabcolsep}{2.2mm}{
\renewcommand\arraystretch{0.8}
\tabcolsep=0.22cm
\begin{tabular}{p{2.5cm}ccccccc}
\toprule
Agent\centering&1&2&3&4&5&6&7\\
\midrule
Nominal\centering&30&34&28&136&131&141&150\\
Rotation\centering&27&30&28&139&148&147&143\\
Scale up\centering&32&24&28&135&124&132&131\\
Shear\centering&27&30&26&124&120&132&135\\
Colinear\centering&31&31&29&131&161&140&130\\
\makecell*[c]{Rotation, scale down, \\shear, and  translation}&35&26&31&134&134&149&140\\%%%0.0001%%% ?2ì??????÷
\bottomrule
\end{tabular}
}
\label{tab:3}
\end{table}

Parallel to this, the affine transformations of the nominal formation with the parameters in TABLE \ref{tab:1} on page 7 are also validated. Figs. \ref{FIG:3.6}-\ref{FIG:3.3} on page 8 verify stable affine transformations under finite communication and partial state availability. From TABLE \ref{tab:3}, each agent has a communication triggering rate of no more than 6\% in per transformation.

%%%%%%%%%%%%%%%%%%%%%%%%%%%%%%%%%%%%%%%%%%%%%%%%%%%%%%%%%
\section{Conclusions \label{Conclusions}}
%%%%%%%%%%%%%%%%%%%%%%%%%%%%%%%%%%%%%%%%%%%%%%%%%%%%%%%%%

In this paper, we studied the distributed event-triggered AFC for linear MASs, which relies only on the state information at the triggering instants, thus significantly reducing the communication frequency and the controller update rate. Tracking controllers are designed for leaders, which removes the usual assumption on system matrix. The designed adaptive ETC protocols not requiring the global information, can be extended to MASs of disparate scales and communication links. Furthermore, the output-based control protocol is also suitable for state unavailability scenarios, expanding its applications.

In unknown contingency situations, such as the battlefield environment, formation maneuvering operations are very critical for breakout and obstacle avoidance tasks. The setting of the time-invariant desired position will result in various static affine formations. In future work, the extension of the method to apply to dynamic formation maneuvering control is necessary. Meanwhile, to handle more practical and generalized situations, we hope to investigate directed network topologies.

%\begin{table*}
%\centering
%\caption{\\Event numbers for each edge $(i,j)$ in the Case \Rmnum{2} with different event-triggering parameter $\eta _{ij}$}
%\label{tab:2}
%\setlength{\tabcolsep}{2.2mm}{
%\begin{tabular}{c|c|c|c|c|c|c|c|c|c|c|c|c}
%\hline\hline
%%\toprule
%&$(1,2)$&$(2,3)$&$(3,4)$&$(4,5)$&$(5,6)$&$(6,1)$&$(1,7)$&$(2,7)$&$(3,7)$&$(4,7)$&$(5,7)$&$(6,7)$\\\hline
%%\midrule
%$0.01$&1&95&89&46&123&251&52&47&88&97&51&154\\\hline%%%0.0001
%$0.02$&1&67&64&32&87&178&37&34&63&70&36&108\\\hline%%%0.0001
%$0.03$&1&54&51&27&71&146&30&27&51&56&28&87\\\hline%%%0.0001
%$0.04$&1&46&44&23&61&126&26&23&44&50&25&75\\\hline
%$0.05$&1&40&41&21&56&113&23&21&39&44&23&67\\\hline%%%0.0001
%$0.06$&1&37&37&18&50&102&21&19&35&41&21&62\\\hline%%%0.0001
%$0.07$&1&34&33&17&45&95&19&17&33&38&20&57\\\hline%%%0.0001
%%\bottomrule
%\hline
%\end{tabular}
%}
%\end{table*}

\bibliographystyle{IEEEtran}
\bibliography{reference}
\end{document}